\newtheorem{definition}{Definition}
\newtheorem{theorem}{Theorem}
\newtheorem{lemma}{Lemma}
\begin{document}

\title{Canonical Coin Systems for Change-Making Problems}

\author{\IEEEauthorblockN{Xuan Cai}
\IEEEauthorblockA{Department of Computer Science and Engineering,
Shanghai Jiao Tong University\\
Shanghai 200240, China\\
Email: caixuanfire@sjtu.edu.cn}}

\maketitle

\begin{abstract}
The Change-Making Problem is to represent a given value with the
fewest coins under a given coin system. As a variation of the
knapsack problem, it is known to be NP-hard. Nevertheless, in most
real money systems, the greedy algorithm yields optimal solutions.
In this paper, we study what type of coin systems that guarantee the
optimality of the greedy algorithm. We provide new proofs for a
sufficient and necessary condition for the so-called
\emph{canonical} coin systems with four or five types of coins, and
a sufficient condition for non-canonical coin systems, respectively.
Moreover, we present an $O(m^2)$ algorithm that decides whether a
tight coin system is canonical.
\end{abstract}


\IEEEpeerreviewmaketitle

\section{Introduction}
The Change-Making Problem comes from the following scenario: in a
shopping mall, the cashier needs to make change for many values of
money based on some \emph{coin system}\footnote[2]{\scriptsize{In
this paper, all the variables range over the set $\mathbb{N}$ of
natural numbers.}} $\$=\langle c_1,c_2,\cdots,c_m\rangle$ with
$1=c_1<c_2<\cdots<c_m$, where $c_i$ denotes the value of the $i$-th
type of coin in $\$$. For example, the cent, nickel, dime and
quarter are four types of US coins, and the corresponding coin
system is $\$=\langle1,5,10,25\rangle$. Since the reserved coins are
limited in reality, the cashier has to handle every exchange with as
few coins as possible.

Formally, the Change-Making Problem is to solve the following
integer programming problem with respect to a given value $x$.
\[
\begin{array}{rl}
\min&\sum_{i=1}^m\alpha_i\medskip\\
\text{s.t.}&\sum_{i=1}^mc_i\alpha_i=x,\quad c_i\geq0\\
\end{array}
\]

As usual, we call a feasible solution
$(\alpha_1,\alpha_2,\cdots,\alpha_m)$ of the above integer
programming problem a \emph{representation} of $x$ under $\$$. If
this representation satisfies $\sum_{j=1}^{i-1}\alpha_j c_j<c_i$ for
$2\leq i\leq m$, then it is the \emph{greedy representation} of $x$,
denoted by $\mathrm{GRD}_{\$}(x)$, and
$|\mathrm{GRD}_{\$}(x)|=\sum_{i=1}^m\alpha_i$ is its size.
Similarly, we also call the optimal solution
$(\beta_1,\beta_2,\cdots,\beta_m)$ the \emph{optimal representation}
of $x$, denoted by $\mathrm{OPT}_{\$}(x)$, and
$|\mathrm{OPT}_{\$}(x)|=\sum_{i=1}^m\beta_i$ is its size.\medskip

\subsection{Problem Statement}
The Change-Making Problem is NP-hard
\cite{lue}\cite{gar:joh}\cite{mag:nem:tro} by a polynomial reduction
from the knapsack problem. There are a large number of
pseudo-polynomial exact algorithms \cite{kol}\cite{nem:ull} solving
this problem, including the one using dynamic programming
\cite{wri}. However, the greedy algorithm, as a simpler approach,
can produce the optimal solutions for many practical instances,
especially canonical coin systems.

\begin{definition}
A coin system $\$$ is canonical if
$|\mathrm{GRD}_{\$}(x)|=|\mathrm{OPT}_{\$}(x)|$ for all $x$.
\end{definition}
For example, the coin system $\$=\langle1,5,10,25\rangle$ is
canonical. Accordingly, the cashier can easily create the optimal
solution by repeatedly taking the largest coin whose value is no
larger than the remaining amount.

\begin{definition}
A coin system $\$$ is non-canonical if there is an $x$ with
$|\mathrm{GRD}_{\$}(x)|>|\mathrm{OPT}_{\$}(x)|$, and such $x$ is
called a counterexample of $\$$.
\end{definition}

\begin{definition}
A coin system $\$=\langle1,c_2,\cdots,c_m\rangle$ is tight if it has
no counterexample smaller than $c_m$.
\end{definition}

For example, both $\$_1=\langle1,7,10,11\rangle$ and
$\$_2=\langle1,7,10,50\rangle$ are non-canonical, and $\$_1$ is
tight but $\$_2$ is not. It is easy to verify that $14$ is the
counterexample for them, i.e.,
$\mathrm{GRD}_{\$_1}(14)=\mathrm{GRD}_{\$_2}(14)=(3,0,0,1)$ and
$\mathrm{OPT}_{\$_1}(14)=\mathrm{OPT}_{\$_2}(14)=(0,2,0,0)$.

Nowadays, canonical coin systems have found numerous applications in
many fields, e.g., finance \cite{nem:ull}, management \cite{eil} and
computer networks \cite{ger:kle}. It is desirable to give a full
characterization of them.

\subsection{Related Work}
Chang and Gill \cite{cha:gil} were the first to study canonical coin
systems. They showed that there must be a counterexample $x$ of the
non-canonical $\$=\langle1,c_2,\cdots,c_m\rangle$ such that $c_3\leq
x<\frac{c_m(c_mc_{m-1}+c_m-3c_{m-1})}{c_m-c_{m-1}}$.

Concerning the smallest counterexamples of noncanonical coin
systems, Tien and Hu established the following two important results
in \cite{tie:hu}.

\begin{theorem}\label{dis}
Let $x$ be the smallest counterexample of the non-canonical coin
system $\$=\langle1,c_2,\cdots,c_m\rangle$. Then $\alpha_i\beta_i=0$
for all $i\in[1,m]$ such that
$\mathrm{OPT}_{\$}(x)=(\alpha_1,\alpha_2,\cdots,\alpha_m)$ and
$\mathrm{GRD}_{\$}(x)=(\beta_1,\beta_2,\cdots,\beta_m)$.
\end{theorem}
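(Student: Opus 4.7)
The plan is to argue by contradiction: assume some coordinate $i$ satisfies both $\alpha_i \geq 1$ and $\beta_i \geq 1$, and then construct a strictly smaller counterexample $x' = x - c_i$, contradicting the minimality of $x$. Since $\alpha_i \geq 1$ forces $x \geq c_i$, the value $x' = x - c_i$ is a nonnegative integer strictly less than $x$, so it suffices to verify $|\mathrm{GRD}_{\$}(x')| > |\mathrm{OPT}_{\$}(x')|$.

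The key step is to analyze how the greedy representation behaves when one coin of type $c_i$ is subtracted. I would invoke the characterization of the greedy representation as the unique vector $(\gamma_1,\ldots,\gamma_m)$ that represents its value and satisfies $\sum_{j<k}\gamma_j c_j < c_k$ for every $k \geq 2$. Starting from $\mathrm{GRD}_{\$}(x) = (\beta_1,\ldots,\beta_m)$, decrement $\beta_i$ by one; the resulting vector represents $x - c_i$, and each partial sum $\sum_{j<k}\beta_j c_j$ either remains unchanged (for $k \leq i$) or decreases by $c_i$ (for $k > i$), so every greedy inequality is preserved. By uniqueness, the modified vector is exactly $\mathrm{GRD}_{\$}(x-c_i)$, and therefore $|\mathrm{GRD}_{\$}(x-c_i)| = |\mathrm{GRD}_{\$}(x)| - 1$.

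On the optimal side, since $\alpha_i \geq 1$, the vector $(\alpha_1,\ldots,\alpha_i-1,\ldots,\alpha_m)$ is a feasible, \emph{though not necessarily optimal}, representation of $x - c_i$, which yields $|\mathrm{OPT}_{\$}(x-c_i)| \leq |\mathrm{OPT}_{\$}(x)| - 1$. Combining the two estimates gives $|\mathrm{GRD}_{\$}(x-c_i)| - |\mathrm{OPT}_{\$}(x-c_i)| \geq |\mathrm{GRD}_{\$}(x)| - |\mathrm{OPT}_{\$}(x)| > 0$, so $x - c_i$ is itself a counterexample, contradicting minimality. The only genuine obstacle is justifying that decrementing one $c_i$ in the greedy representation still yields a greedy representation; once the partial-sum characterization is invoked, this is a short inequality check, and the remainder of the argument is straightforward bookkeeping.
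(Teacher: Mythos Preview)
Your argument is correct. Note, however, that the paper does not supply its own proof of this theorem: it is quoted as a result of Tien and Hu \cite{tie:hu} and used as a black box throughout. Your contradiction strategy---removing a shared coin $c_i$ from both representations and checking that the decremented greedy vector still satisfies the partial-sum inequalities $\sum_{j<k}\gamma_j c_j < c_k$, hence remains greedy by uniqueness---is exactly the standard proof and goes through cleanly. The one edge case you did not mention explicitly, $x - c_i = 0$, is harmless: it would force $x = c_i$, whence $|\mathrm{GRD}_{\$}(x)| = |\mathrm{OPT}_{\$}(x)| = 1$ and $x$ would not be a counterexample to begin with.
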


\begin{theorem}\label{th}
Let $\$_1=\langle1,c_2,\cdots,c_m\rangle$ and
$\$_2=\langle1,c_2,\cdots,c_m,c_{m+1}\rangle$ be two coin systems
such that $\$_1$ is canonical but $\$_2$ is not. Then there is some
$k$ such that $k\cdot c_m<c_{m+1}<(k+1)\cdot c_m$ and $(k+1)\cdot
c_m$ is a counterexample of $\$_2$.
\end{theorem}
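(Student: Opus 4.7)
My plan is to let $x$ be the smallest counterexample of $\$_2$ and prove the single inequality $|\mathrm{OPT}_{\$_1}(r)|\ge k+1$, where $k=\lfloor c_{m+1}/c_m\rfloor$ and $r=(k+1)c_m-c_{m+1}$. This inequality yields both assertions of the theorem at once: it rules out $r=c_m$ (since $|\mathrm{OPT}_{\$_1}(c_m)|=1$), giving the strict bounds $kc_m<c_{m+1}<(k+1)c_m$; and it certifies that $(k+1)c_m$ is a counterexample, because the $\$_2$-greedy of $(k+1)c_m$ is one copy of $c_{m+1}$ followed by the $\$_1$-greedy of $r$, of total size $1+|\mathrm{OPT}_{\$_1}(r)|\ge k+2$, while $k+1$ copies of $c_m$ form a representation of size only $k+1$.

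The first step is to use canonicity of $\$_1$ together with Theorem~\ref{dis} to pin down the shape of $x$. If $x<c_{m+1}$, the greedy and optimal representations in $\$_2$ coincide with those in $\$_1$, contradicting that $x$ is a counterexample; hence $x\ge c_{m+1}$ and $\beta_{m+1}\ge 1$. Then $\alpha_{m+1}\beta_{m+1}=0$ forces $\alpha_{m+1}=0$, so the optimal representation of $x$ in $\$_2$ uses only $\$_1$-coins and $|\mathrm{OPT}_{\$_2}(x)|=|\mathrm{OPT}_{\$_1}(x)|=|\mathrm{GRD}_{\$_1}(x)|$. Applying Theorem~\ref{dis} again at coordinate $m$: the $\$_1$-greedy of $x$ uses $\alpha_m=\lfloor x/c_m\rfloor\ge 1$ copies of $c_m$, so $\beta_m$ must vanish, which says the remainder $s:=x-\beta_{m+1}c_{m+1}$ is strictly less than $c_m$.

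The core of the argument is a single subadditivity computation. Writing $d=c_{m+1}-kc_m$ so that $d+r=c_m$, one checks the identity $x+\beta_{m+1}r=\beta_{m+1}(k+1)c_m+s$, and because $s<c_m$ canonicity of $\$_1$ evaluates the $\$_1$-optimum of the right-hand side exactly as $\beta_{m+1}(k+1)+|\mathrm{OPT}_{\$_1}(s)|$. Subadditivity of $|\mathrm{OPT}_{\$_1}(\cdot)|$, together with $|\mathrm{OPT}_{\$_1}(\beta_{m+1}r)|\le\beta_{m+1}|\mathrm{OPT}_{\$_1}(r)|$, bounds this quantity above by $|\mathrm{OPT}_{\$_1}(x)|+\beta_{m+1}|\mathrm{OPT}_{\$_1}(r)|$, and the counterexample inequality $|\mathrm{OPT}_{\$_1}(x)|\le\beta_{m+1}+|\mathrm{OPT}_{\$_1}(s)|-1$ then reduces everything to $\beta_{m+1}|\mathrm{OPT}_{\$_1}(r)|\ge\beta_{m+1}k+1$, i.e., $|\mathrm{OPT}_{\$_1}(r)|>k$, which for the integer $|\mathrm{OPT}_{\$_1}(r)|$ is the desired $|\mathrm{OPT}_{\$_1}(r)|\ge k+1$.

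The main obstacle I expect is spotting the identity $x+\beta_{m+1}r=\beta_{m+1}(k+1)c_m+s$: it is this rearrangement that couples subadditivity in $\$_1$ with the exact canonicity evaluation of the right-hand side, producing the key inequality in one stroke rather than via a case analysis on $\beta_{m+1}$ and on the sign of $s-r$ (which also works but requires several separate subadditive decompositions).
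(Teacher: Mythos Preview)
The paper does not supply its own proof of Theorem~\ref{th}; it is quoted from Tien and Hu \cite{tie:hu} as background and used as a black box in later arguments. So there is nothing in the paper to compare your proof against.

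Your argument is correct. The one step worth making fully explicit is the passage ``the $\$_1$-greedy of $x$ uses $\alpha_m=\lfloor x/c_m\rfloor\ge 1$ copies of $c_m$, so $\beta_m$ must vanish.'' You are implicitly taking $\mathrm{OPT}_{\$_2}(x)$ to \emph{be} $\mathrm{GRD}_{\$_1}(x)$: once $\alpha_{m+1}=0$, the optimal $\$_2$-representation lives in $\$_1$, and by canonicity of $\$_1$ the $\$_1$-greedy is one such optimum, so you may choose it. Theorem~\ref{dis} then applies at coordinate $m$ because its standard proof (subtract a common coin to produce a smaller counterexample) works for any optimal representation, not a distinguished one. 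With that said, the subadditivity computation via the identity $x+\beta_{m+1}r=\beta_{m+1}(k+1)c_m+s$ is clean and does exactly what you claim: canonicity of $\$_1$ evaluates $|\mathrm{OPT}_{\$_1}|$ of the right-hand side as $\beta_{m+1}(k+1)+|\mathrm{OPT}_{\$_1}(s)|$ because $s<c_m$, and the strict counterexample inequality $|\mathrm{OPT}_{\$_1}(x)|\le \beta_{m+1}+|\mathrm{OPT}_{\$_1}(s)|-1$ supplies the ``$+1$'' that turns $\ge k$ into $>k$. The conclusions that $r\ne c_m$ (hence $c_m\nmid c_{m+1}$, giving the strict bounds) and that $(k+1)c_m$ is a counterexample then follow exactly as you outline.
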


These two results not only imply that many coin systems are
canonical such as positive integer arithmetic progressions,
geometric progressions and the Fibonacci sequence but also are the
starting point of a lot of subsequent work.

Building on Theorem~\ref{dis}, Kozen and Zaks \cite{koz:zak} gave a
tight range of smallest counterexamples of non-canonical coin
systems:

\begin{theorem}\label{tkz1}
Let $\$=\langle1,c_2,\cdots,c_m\rangle$ be a coin system. If $\$$ is
not canonical, then the smallest counterexample lies in the range
\[
c_3+1<x<c_{m-1}+c_m.
\]
Furthermore, these bounds are tight.
\end{theorem}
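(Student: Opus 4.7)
The plan is to let $x$ denote the smallest counterexample and write $\mathrm{OPT}_{\$}(x)=(\alpha_1,\ldots,\alpha_m)$, $\mathrm{GRD}_{\$}(x)=(\beta_1,\ldots,\beta_m)$; Theorem~\ref{dis} tells us $\alpha_i\beta_i=0$ for every $i$. I will handle the two inequalities separately, and then treat tightness.

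For the lower bound $c_3+1<x$, I plan to sharpen the classical Chang--Gill bound $x\ge c_3$ by eliminating the two borderline cases $x\in\{c_3,c_3+1\}$. If $x=c_3$ then trivially $|\mathrm{GRD}_{\$}(x)|=|\mathrm{OPT}_{\$}(x)|=1$; if $x=c_3+1=c_4$, the same equality holds. Otherwise $c_3<x<c_4$, so $\beta_3\ge 1$ and Theorem~\ref{dis} forces $\alpha_3=0$; since $c_i>x$ for $i\ge 4$ also kills $\alpha_i$ there, $\mathrm{OPT}_{\$}(x)$ lives in the trivially canonical sub-system $\langle 1,c_2\rangle$, and a direct computation yields $|\mathrm{OPT}_{\$}(x)|\ge 2=|\mathrm{GRD}_{\$}(x)|$. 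In no case is $x$ a counterexample.

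For the upper bound $x<c_{m-1}+c_m$, I plan a proof by contradiction: assume $x\ge c_{m-1}+c_m$ and construct a strictly smaller counterexample. Since $x\ge c_m$ we have $\beta_m\ge 1$, and Theorem~\ref{dis} forces $\alpha_m=0$. Enumerate the coins used by $\mathrm{OPT}_{\$}(x)$ in any order as $d_1,\ldots,d_k$ with each $d_j\le c_{m-1}$. The partial sums $s_i=d_1+\cdots+d_i$ climb from $0$ to $x$ in steps of at most $c_{m-1}$, so they must cross $c_m$ at some index with $c_m\le s_i<c_m+c_{m-1}$. Setting $y:=s_i$ gives $y<x$; by minimality of $x$, $y$ is not a counterexample, and the prefix representation of $y$ by $i$ coins must already be optimal (else substituting a shorter representation into $\mathrm{OPT}_{\$}(x)$ would beat $k$). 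Thus $|\mathrm{GRD}_{\$}(y)|=|\mathrm{OPT}_{\$}(y)|=i$. Since $c_m\le y<2c_m$, the greedy expansion of $y$ peels off exactly one $c_m$, so $|\mathrm{GRD}_{\$}(y-c_m)|=i-1$. Concatenating this greedy representation with the tail coins $d_{i+1},\ldots,d_k$ exhibits $x-c_m$ in $k-1$ coins, giving $|\mathrm{OPT}_{\$}(x-c_m)|\le k-1$, while $|\mathrm{GRD}_{\$}(x-c_m)|=|\mathrm{GRD}_{\$}(x)|-1>k-1$. Hence $x-c_m<x$ is itself a counterexample, contradicting minimality.

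Tightness will be established by exhibiting explicit coin systems attaining each bound: a small three- or four-coin example whose smallest counterexample is $c_3+2$, and a family achieving a smallest counterexample equal to $c_{m-1}+c_m-1$. The hardest part I anticipate is the partial-sum crossing argument, specifically justifying the equality $|\mathrm{OPT}_{\$}(y)|=i$ rather than a strict inequality; this rests crucially on the optimality of $\mathrm{OPT}_{\$}(x)$, since any strict improvement on the prefix would propagate into $\mathrm{OPT}_{\$}(x)$ and contradict its optimality.
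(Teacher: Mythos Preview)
The paper does not actually prove Theorem~\ref{tkz1}; it merely quotes the result from Kozen and Zaks \cite{koz:zak} and uses it as a black box (only Theorem~\ref{tkz2} receives an in-paper proof, as the authors explicitly announce). So there is no argument in the paper to compare your proposal against.

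On its own merits, your plan is correct and is essentially the original Kozen--Zaks argument. For the lower bound, once Theorem~\ref{dis} kills $\alpha_1$ and $\alpha_3$ at $x=c_3+1$, the optimal representation is forced to use only $c_2$, and since $x>c_2$ this already needs at least two coins; just phrase the case $m=3$ (where there is no $c_4$) cleanly rather than writing ``$c_3<x<c_4$''. For the upper bound, the partial-sum crossing is exactly the right device, and your key observation---that the prefix $d_1,\ldots,d_i$ must itself be optimal for $y$, since any shorter representation of $y$ could be spliced back into $\mathrm{OPT}_{\$}(x)$---is correctly identified and justified. For tightness you can cite the three-coin systems $\langle 1,3,4\rangle$ (smallest counterexample $6=c_3+2$) and $\langle 1,4,5\rangle$ (smallest counterexample $8=c_{m-1}+c_m-1$); both are easy to verify by hand.
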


Moreover, they gave a necessary and sufficient condition of the
canonical coin system with three types of coins in \cite{koz:zak}.

\begin{theorem}\label{tkz2}
The coin system $\$=\langle1,c_2,c_3\rangle$ is non-canonical if and
only if $0<r<c_2-q$ where $c_3=qc_2+r$ and $r\in[0,c_2-1]$.
\end{theorem}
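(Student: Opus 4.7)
\medskip
\noindent\textbf{Proof Plan.} I would prove both directions by analyzing the structure of a smallest counterexample via Theorems~\ref{dis} and~\ref{tkz1}, then converting the resulting arithmetic condition into the stated inequality on $r$.

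For the harder direction ($\Rightarrow$), suppose $\$=\langle 1,c_2,c_3\rangle$ is non-canonical and let $x$ be the smallest counterexample, with $\mathrm{OPT}_\$(x)=(\alpha_1,\alpha_2,\alpha_3)$ and $\mathrm{GRD}_\$(x)=(\beta_1,\beta_2,\beta_3)$. By Theorem~\ref{tkz1}, $c_3+1<x<c_2+c_3$, so greedy must use exactly one copy of $c_3$ and no copies of $c_2$: that is, $\beta_3=1$, $\beta_2=0$, $\beta_1=x-c_3\geq 2$. Theorem~\ref{dis} forces $\alpha_3=0$ (since $\beta_3=1$) and $\alpha_1=0$ (since $\beta_1>0$), so $x=\alpha_2 c_2$; in particular $c_2\mid x$. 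Combining $c_3<x<c_2+c_3$ with $x=\alpha_2c_2$ and $c_3=qc_2+r$ shows that this has an integer solution $\alpha_2$ exactly when $r>0$, and in that case $\alpha_2=q+1$, so $x=(q+1)c_2$. The counterexample condition $|\mathrm{OPT}_\$(x)|<|\mathrm{GRD}_\$(x)|$ then reads $q+1<1+(x-c_3)=1+c_2-r$, which rearranges to $r<c_2-q$.

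For the converse ($\Leftarrow$), assume $0<r<c_2-q$ and set $x=(q+1)c_2$. A direct computation shows that greedy picks one $c_3$ (since $x>c_3$ because $r>0$) and leaves remainder $x-c_3=c_2-r<c_2$, which greedy completes with $c_2-r$ unit coins; thus $|\mathrm{GRD}_\$(x)|=1+c_2-r$. The representation using $q+1$ coins of value $c_2$ has size $q+1$, and the hypothesis $r<c_2-q$ gives $q+1<1+c_2-r$, so $x$ is a counterexample and $\$$ is non-canonical.

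The main obstacle, and the place where I would be most careful, is the step that pins down $x=(q+1)c_2$: one has to argue that the range $c_3<x<c_2+c_3$ together with $c_2\mid x$ and $c_3=qc_2+r$ leaves the single candidate $\alpha_2=q+1$, and that the case $r=0$ yields no admissible $\alpha_2$ at all (so the converse of the ``$r>0$'' half is automatic). Everything else is a clean appeal to Theorems~\ref{dis} and~\ref{tkz1} and elementary arithmetic, so the proof should be quite short once the counterexample $(q+1)c_2$ is identified.
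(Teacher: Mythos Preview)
Your proof is correct and follows essentially the same approach as the paper: both directions hinge on Theorems~\ref{dis} and~\ref{tkz1} to pin down the smallest counterexample as a multiple of $c_2$ in the interval $(c_3,\,c_2+c_3)$, and then reduce to the arithmetic inequality $q+1<1+c_2-r$. The only minor difference is your choice of witness in the $(\Leftarrow)$ direction: you exhibit $(q+1)c_2$ (the smallest counterexample itself), whereas the paper uses $c_2+c_3-1$ with the representation $(r-1,\,q+1,\,0)$; both work, and your choice has the pleasant feature of dovetailing directly with the $(\Rightarrow)$ analysis.
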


Here, we provide a proof of this theorem which will be used later.

\begin{proof}[Proof of Theorem~\ref{tkz2}]
\begin{enumerate}
\item[($\Leftarrow$)]Consider
the integer $x=c_2+c_3-1$, $\mathrm{GRD}_{\$}(x)=(c_2-1,0,1)$. Since
$(r-1,q+1,0)$ is a representation of $x$, we have
$|\mathrm{OPT}_{\$}(x)|\leq r+q$. By the precondition $0<r<c_2-q$,
it is easy to see $|\mathrm{OPT}_{\$}(x)|<|\mathrm{GRD}_{\$}(x)|$.
Thus, $c_3+c_2-1$ is a counterexample of $\$$ and it is
non-canonical.\smallskip

\item[($\Rightarrow$)]Since $\$$ is non-canonical, let $x$ be the
smallest counterexample. By Theorem~\ref{tkz1},
$x\in[c_3+2,c_2+c_3-1]$. Without loss of generality, let
$\mathrm{GRD}_{\$}(x)=(e,0,1)$ and $\mathrm{OPT}_{\$}(x)=(0,k,0)$
with $e\in[1,c_2-1]$. Then we have $x=c_3+e=kc_2$, i.e., $q=k-1<e$
and $r=c_2-e\geq1$. Thus, $0<r<c_2-q$.
\end{enumerate}
\end{proof}

Pearson \cite{pea} proved the following theorem that characterizes
the smallest counterexample of the non-canonical coin system.

\begin{theorem}\label{tp}
Let $x$ be the smallest counterexample of the non-canonical coin
system $\$=\langle1,c_2,\cdots,c_m\rangle$. If
$\mathrm{OPT}_{\$}(x)=(\underbrace{0,\cdots,0}_0,\beta_l,\cdots,\beta_r,\underbrace{0,\cdots,0}_0)$
with $\beta_l,\beta_r>0$, then
$\mathrm{GRD}_{\$}(c_{r+1}-1)=(\alpha_1,\cdots,\alpha_{l-1},\beta_l-1,\beta_{l+1},\cdots,\beta_r,\underbrace{0,\cdots,0}_0)$.
\end{theorem}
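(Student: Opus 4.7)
The plan is to exploit the minimality of $x$: with $Y:=c_{r+1}-1$, if $Y<x$ then $\mathrm{GRD}_{\$}(Y)$ is forced to be an optimal representation of $Y$, and I can pin it down against a candidate of the claimed form. To see $c_{r+1}\le x$, let $k$ be the largest index with $\mathrm{GRD}_{\$}(x)_k>0$; by Theorem~\ref{dis} the supports of $\mathrm{OPT}_{\$}(x)$ and $\mathrm{GRD}_{\$}(x)$ are disjoint, so $\beta_k=0$ and $k\notin\{l,r\}$. Combined with $c_k\le x<c_{k+1}$ and $x\ge\beta_r c_r\ge c_r$, the cases $k<l$ and $l<k<r$ (both with $\beta_k=0$) each force $x<c_r$, a contradiction; hence $k\ge r+1$, so $c_{r+1}\le c_k\le x$, giving $Y<x$ and $|\mathrm{GRD}_{\$}(Y)|=|\mathrm{OPT}_{\$}(Y)|$.

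The key inequality is $x-c_l<c_{r+1}$. I would argue by contradiction: assume $v:=x-c_l\ge c_{r+1}$. Since $v<x$, $\mathrm{GRD}_{\$}(v)$ is optimal, and deleting one $c_l$ from $\mathrm{OPT}_{\$}(x)$ gives $|\mathrm{OPT}_{\$}(v)|=|\mathrm{OPT}_{\$}(x)|-1$, so $\mathrm{GRD}_{\$}(v)$ augmented by one extra coin of value $c_l$ is an alternative optimal representation of $x$. Because $v\ge c_{r+1}$, greedy on $v$ uses at least one coin $c_{k'}$ with $k'\ge r+1$, so this alternative optimum has $c_{k'}$ in its support. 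Applying Theorem~\ref{dis} to this alternative optimum against $\mathrm{GRD}_{\$}(x)$ forces $\mathrm{GRD}_{\$}(x)_{k'}=0$; tracing the greedy algorithm on $x$ downward from its largest coin, the remainder at some step must land in $[c_{k'},c_{k'+1})$ so that greedy is forced to pick $c_{k'}$, giving the sought contradiction.

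With $x-c_l<c_{r+1}$ in hand, set $z:=Y-(x-c_l)=c_{r+1}-1+c_l-x\in[0,c_l)$ and form the candidate $\rho:=(\mathrm{GRD}_{\$}(z)_1,\ldots,\mathrm{GRD}_{\$}(z)_{l-1},\beta_l-1,\beta_{l+1},\ldots,\beta_r,0,\ldots,0)$, which sums to $Y$. I would verify that $\rho$ is the greedy representation by checking the defining inequalities $\sum_{j<i}\rho_j c_j<c_i$: for $i>r$ the partial sum is $Y<c_{r+1}\le c_i$; for $i\le l$ the condition is inherited from $\mathrm{GRD}_{\$}(z)$; for $l<i\le r$, a violation would allow swapping the accumulated lower-index coins for a single $c_i$ in $\mathrm{OPT}_{\$}(x)$, yielding a strictly shorter representation of $x$ and contradicting its optimality. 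By uniqueness of the greedy representation, $\rho=\mathrm{GRD}_{\$}(Y)$, which is exactly the claimed form.

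The main obstacle is clearly the bound $x<c_{r+1}+c_l$: Theorem~\ref{tkz1} only supplies the coarser $x<c_{m-1}+c_m$, which is too weak when $r<m-1$, so the sharper inequality must be squeezed out via the Theorem~\ref{dis}-based exchange argument above, together with a delicate trace of the greedy algorithm on $x$ to show the alternative optimum's large coin $c_{k'}$ must be picked up by greedy somewhere.
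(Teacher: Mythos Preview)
First, note that the paper does not itself prove Theorem~\ref{tp}: it is quoted as Pearson's result from~\cite{pea}, with no argument given. So there is no ``paper's proof'' to compare against, and I evaluate your sketch on its own merits.

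The serious gap is in your derivation of the key inequality $x-c_l<c_{r+1}$. After building the alternative optimum $\sigma=\mathrm{GRD}_{\$}(v)+e_l$ (with $v=x-c_l$) and invoking Theorem~\ref{dis} to conclude $\mathrm{GRD}_{\$}(x)_{k'}=0$, you assert that ``tracing the greedy algorithm on $x$ downward, the remainder at some step must land in $[c_{k'},c_{k'+1})$'' and hence greedy must pick $c_{k'}$. This step is unjustified and can fail. Take $k'$ to be the largest index used by $\mathrm{GRD}_{\$}(v)$, so that $c_{k'}\le v<c_{k'+1}$; nothing you have shown excludes the case $x\ge c_{k'+1}$. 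In that case greedy on $x$ first selects some $c_k$ with $k\ge k'+1$, and the very first remainder satisfies
\[
x-c_k \;\le\; x-c_{k'+1} \;=\; (v+c_l)-c_{k'+1} \;<\; c_l \;<\; c_{k'}\,,
\]
so the greedy remainder jumps from above $c_{k'+1}$ to below $c_{k'}$ in a single step, and $c_{k'}$ is never selected. Thus $\mathrm{GRD}_{\$}(x)_{k'}=0$ is perfectly consistent with the greedy trace and no contradiction arises. Without the inequality $x-c_l<c_{r+1}$ your quantity $z=c_{r+1}-1-(x-c_l)$ need not be nonnegative, and the candidate $\rho$ cannot even be formed.

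A smaller issue: in checking the greedy property of $\rho$ at indices $l<i\le r$, you say a violation would let you ``swap the accumulated lower-index coins for a single $c_i$'' in $\mathrm{OPT}_{\$}(x)$ and strictly shorten it. But the relevant block of $\mathrm{OPT}_{\$}(x)$ sums to a value strictly \emph{greater} than $c_i$ (since $z<c_l$ forces $\sum_{l\le j<i}\beta_j c_j>c_i$), so you cannot trade it for one coin; you must also represent the excess, and an actual counting argument is needed to show the total drops. This part looks repairable with more care, but the first gap is structural: the exchange-and-trace idea, as stated, does not force the contradiction you need.
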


Based on this theorem, he gave an $O(m^3)$ algorithm to decide
whether a coin system $\$=\langle1,c_2,\cdots,c_m\rangle$ is
canonical.

Recently, Niewiarowska and Adamaszek \cite{ada:nie} investigate the
structure of canonical coin systems and present a series of
sufficient conditions of non-canonical coin systems.

\subsection{Our Contribution}
In this paper, we study canonical coin systems for the Change-Making
Problem and obtain the following results.

\begin{itemize}
\item We give an easy proof for a sufficient and necessary
condition of canonical coin systems with four or five types of
coins.

\item We provide a new proof for natural sufficient condition of non-canonical coin
systems.

\item We present an $O(m^2)$ algorithm that decides whether a tight coin
system is canonical.
\end{itemize}

The rest of this paper is organized as follows. In Section 2, we
study canonical coin systems with four types of coins. In Section 3,
we extend the study to canonical coin systems with five types of
coins. Section 4 introduces tight canonical coin systems and Section
5 presents an $O(m^2)$ algorithm that decides whether a tight coin
system is canonical. Finally, in Section 6, we address some of the
questions left open and discuss future work.

\section{Coin System with Four Types of Coins}
In this section, we study canonical coin systems with four types of
coins and give a full characterization of them based on
Theorem~\ref{tkz2}.

\begin{theorem}\label{four}
A coin system $\$=\langle1,c_2,c_3,c_4\rangle$ is non-canonical if
and only if $\$$ satisfies exactly one of the following conditions:

\begin{enumerate}
\item $\langle1,c_2,c_3\rangle$ is non-canonical.

\item $|\mathrm{GRD}_{\$}((k+1)\cdot c_3)|>k+1$ with $k\cdot c_3<c_4<(k+1)\cdot c_3$.
\end{enumerate}
\end{theorem}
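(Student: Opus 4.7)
The plan is to prove the biconditional in two directions. For necessity, the key is to dispatch the subcases via Theorem~\ref{th}: if $\langle 1,c_2,c_3\rangle$ is itself non-canonical then condition 1 holds; otherwise Theorem~\ref{th} (applied to $\$_1=\langle 1,c_2,c_3\rangle$ and $\$_2=\$$) produces an integer $k$ with $kc_3<c_4<(k+1)c_3$ such that $(k+1)c_3$ is a counterexample of $\$$. Combined with the representation $(0,0,k+1,0)$, which caps $|\mathrm{OPT}_\$((k+1)c_3)|$ at $k+1$, being a counterexample forces the greedy size to exceed $k+1$, which is exactly condition 2.

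For sufficiency, condition 2 is immediate: the representation $(0,0,k+1,0)$ shows the optimal is at most $k+1$, while the hypothesis makes the greedy strictly larger, so $(k+1)c_3$ is a counterexample. The substantive work is showing that condition 1 alone forces $\$$ to be non-canonical. Invoking Theorem~\ref{tkz2}, I would write $c_3=qc_2+r$ with $0<r<c_2-q$, set $s=c_4-c_3\geq 1$, and exhibit a counterexample of $\$$ by case analysis on $s$, using two candidate values inherited from the three-coin subsystem, $y=(q+1)c_2$ and $y'=c_3+c_2$.

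When $s>c_2-r$ (so $c_4>(q+1)c_2$), greedy on $y$ under $\$$ ignores $c_4$ and matches that under $\langle 1,c_2,c_3\rangle$, yielding $c_2-r+1>q+1$ coins against the representation $(0,q+1,0,0)$. When $s<c_2-r-q$, greedy on $y$ under $\$$ takes one $c_4$ plus $c_2-r-s$ pennies, still exceeding $q+1$. For the intermediate range $c_2-r-q\leq s\leq c_2-r-1$, I switch to $y'$, whose greedy uses one $c_4$ plus $c_2-s\geq 2$ pennies and so exceeds the two-coin representation $(0,1,1,0)$. The boundary $s=c_2-r$ (so $c_4=(q+1)c_2$) splits on $r$: if $r\geq 2$, $y'$ still beats the representation; for the edge case $r=1$, I would instead use $y''=2c_3$, whose greedy decomposes as $c_4+(q-1)c_2+2\cdot 1$, totaling $q+2$ coins against the representation $(0,0,2,0)$.

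The main obstacle is this case analysis on $s$: one must verify that the sub-ranges cover all $s\geq 1$ exhaustively and that each range yields a genuine counterexample. The edge case $r=1$, $c_4=(q+1)c_2$ is the subtlest point, since neither $y$ nor $y'$ suffices there and one must invoke the third candidate $y''=2c_3$ (which, consistently, also witnesses condition~2, suggesting the theorem is best read as ``at least one of''). Beyond this bookkeeping, every step reduces to direct arithmetic with the greedy algorithm.
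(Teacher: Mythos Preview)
Your argument is correct. The necessity direction and the sufficiency of condition~2 match the paper's reasoning exactly (both hinge on Theorem~\ref{th} and the trivial representation $(0,0,k+1,0)$). The divergence is in how you handle sufficiency of condition~1, i.e., the content of the paper's Lemma~\ref{lem1}.

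The paper argues by contradiction: assuming $\$'=\langle 1,c_2,c_3,c_4\rangle$ canonical, it takes the smallest counterexample $x$ of $\langle 1,c_2,c_3\rangle$, observes $x\in[c_3+2,c_2+c_3-1]$, and disposes of $x<c_4$ immediately. In the remaining case $c_4\le x\le c_2+c_3-1$, analyzing the greedy representation of $c_2+c_3$ forces $c_4=c_2+c_3-1$ and then $r=1$ (via the structure of $x$ from the proof of Theorem~\ref{tkz2}); a single explicit counterexample $2qc_2+c_2-1$ finishes. So the paper collapses everything to one residual configuration.

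Your route is a direct construction: you parametrize by $s=c_4-c_3$ and for each sub-range of $s$ exhibit a counterexample among $y=(q+1)c_2$, $y'=c_2+c_3$, $y''=2c_3$. This avoids the contradiction setup and never invokes the minimality of $x$, at the cost of five cases instead of two. Your edge case $r=1$, $c_4=(q+1)c_2$ is precisely the paper's residual configuration; your witness $y''=2c_3$ is a bit cleaner than the paper's $2qc_2+c_2-1$, and your observation that this case simultaneously satisfies condition~2 (so the theorem should read ``at least one of'' rather than ``exactly one of'') is a valid correction to the statement as printed.
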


The proof is based on an analysis of the coin system
$\langle1,c_2,c_3\rangle$. If $\langle1,c_2,c_3\rangle$ is
canonical, we can decide whether $\$$ is canonical by
Theorem~\ref{th}. Otherwise, the following lemma covers the
remaining case.


\begin{lemma}\label{lem1}
Let $\$=\langle1,c_2,c_3\rangle$ be a coin system with $c_3=qc_2+r$.
If $\$$ is non-canonical, then the coin system $\$'=\langle
1,c_2,c_3,c_4\rangle$ is also non-canonical.
\end{lemma}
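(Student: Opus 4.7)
The plan is to exhibit an explicit counterexample of $\$'$ depending on where $c_4$ sits relative to $(q+1)c_2$. The starting point is the observation that $x_0 = (q+1)c_2 = c_3 + (c_2 - r)$ is a counterexample of $\$$: under $\$$ its greedy representation is $(c_2 - r, 0, 1)$ of size $c_2 - r + 1$, whereas $(0, q+1, 0)$ has size $q + 1$, and the non-canonicity hypothesis $q + r < c_2$ from Theorem~\ref{tkz2} makes the latter strictly smaller.

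Three cases will cover every admissible $c_4$. When $c_4 > x_0$, the value $x_0$ lies below $c_4$, so its greedy and optimal representations under $\$'$ are unchanged and $x_0$ is still a counterexample. When $c_3 < c_4 < x_0$, I would write $c_4 = qc_2 + r'$ with $r + 1 \leq r' \leq c_2 - 1$ and take $y_0 = c_2 + c_3 = (q+1)c_2 + r$; greedy takes $c_4$ and then $c_2 + r - r' \in [1, c_2 - 1]$ ones, giving size $c_2 + r - r' + 1 \geq 3$, which exceeds the size of the two-coin representation $(0, 1, 1, 0)$ of $y_0$.

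The remaining case $c_4 = x_0$ splits further on $r$. If $r \geq 2$, the same $y_0 = c_4 + r$ works since greedy uses $1 + r \geq 3$ coins while $(0, 1, 1, 0)$ uses only two. If $r = 1$, greedy uses just two coins on $y_0$, so I would switch to $z_0 = 2c_3$; a short check based on $(1 - q)c_2 \leq 2$ shows $c_4 \leq z_0$, and greedy then takes $c_4$ and fills $z_0 - c_4 = (q-1)c_2 + 2$ as $q - 1$ coins of value $c_2$ plus two ones, for a total of $q + 2 \geq 3$ coins, beaten by $(0, 0, 2, 0)$ of size two.

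The main obstacle is precisely this boundary case $c_4 = (q+1)c_2$ with $r = 1$: both $x_0$ and $y_0$ are represented by $\$'$ in very few coins (as a single coin and as $c_4$ plus one unit, respectively), so neither remains a counterexample. What one needs is a value whose optimal representation actively avoids $c_4$ and $c_2$; the choice $z_0 = 2c_3$ is clean because its optimum uses two $c_3$'s while $c_4$ fits only once into it, leaving greedy with too many leftover ones. Elsewhere the calculations are routine, but the bookkeeping to verify that $c_4 \leq z_0$ and that greedy on the remainder decomposes as claimed has to be done carefully in terms of $q$, $r$, and $c_2$.
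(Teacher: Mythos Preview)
Your argument is correct and follows essentially the same route as the paper: both split on the position of $c_4$ relative to the counterexample $(q+1)c_2$ of $\$$, use $c_2+c_3$ to handle the intermediate range, and isolate the boundary case $c_4=(q+1)c_2$, $r=1$ as the only one requiring a fresh witness. The one notable difference is that in this boundary case you take $z_0=2c_3$, whereas the paper uses $2c_3+c_2-3$; your choice is slightly cleaner but the idea is the same.
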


\begin{proof}
Since $\$$ is non-canonical, we can find the smallest counterexample
$x\in[c_3+2,c_2+c_3-1]$ by Theorem \ref{tkz1}. Assume that there is
some $c_4>c_3$ such that $\$'=\langle1,c_2,c_3,c_4\rangle$ is
canonical. We will deduce a contradiction based on the analysis of
$x$.

\begin{itemize}
\item If $x<c_4$, then $x$ is a counterexample of $\$'$, a
contradiction.

\item Otherwise, $x\geq c_4$. It is easy to see $(0,1,1,0)$ is a
representation of $c_2+c_3$, and $|\mathrm{OPT}_{\$'}(c_2+c_3)|=2$
for $c_4\leq x\leq c_2+c_3-1$. By the above assumption, we know
$\delta=c_2+c_3-c_4$ must be a coin, and here $\delta=1$. Hence,
$x=c_4=c_2+c_3-1$. By the proof of Theorem~\ref{tkz2}, we have
$x=kc_2=c_2+c_3-1$, i.e., $r=1$ and $c_3=qc_2+1$ and $c_4=qc_2+c_2$.
Thus,
\[
\$'=\langle1,c_2,qc_2+1,qc_2+c_2\rangle.
\]

For the integer $2qc_2+c_2-1$, it is easy to see that
$(c_2-3,0,2,0)$ is a representation under $\$'$, and
$\mathrm{GRD}_{\$'}(2qc_2$\\
$+c_2-1)=(c_2-1,q-1,0,1)$. Hence, we have
$|(c_2-3,0,2,0)|<|\mathrm{GRD}_{\$'}(2qc_2+c_2-1)|$, that is,
$2qc_2+c_2-1$ is a counterexample of $\$'$, a contradiction.
\end{itemize}
\end{proof}

Moreover, we prove the following Theorem~\ref{cx2} in which the coin
system with three types of coins plays a somewhat surprising role.

\begin{theorem}\label{cx2}
If a coin system $\$_1=\langle1,c_2,c_3\rangle$ is non-canonical,
then the coin system $\$_2=\langle1,c_2,c_3,\cdots,c_m\rangle$ is
also non-canonical for $m\geq4$.
\end{theorem}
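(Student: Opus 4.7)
The plan is to prove Theorem~\ref{cx2} by induction on $m$, with the base case $m = 4$ given directly by Lemma~\ref{lem1}. For the inductive step, write $\$_0 = \langle 1, c_2, \ldots, c_{m-1}\rangle$ and assume (by the inductive hypothesis) that $\$_0$ is non-canonical; let $y$ be its smallest counterexample, so by Theorem~\ref{tkz1} we have $c_3 + 1 < y < c_{m-2} + c_{m-1}$.

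In the easy case $y < c_m$, any representation of $y$ in $\$_2 = \langle 1, c_2, \ldots, c_m\rangle$ cannot involve $c_m$, so the greedy and optimal representations of $y$ in $\$_2$ coincide with those in $\$_0$, and $y$ remains a counterexample of $\$_2$.

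The remaining case is $c_m \leq y < c_{m-2} + c_{m-1}$, which in particular forces $c_m < c_{m-2} + c_{m-1}$. Here adding $c_m$ may absorb $y$ as a counterexample, so I would argue by contradiction: suppose $\$_2$ is canonical. Since $y < 2c_m$, the greedy representation of $y$ in $\$_2$ uses exactly one $c_m$, so $|\mathrm{GRD}_{\$_2}(y)| = 1 + |\mathrm{GRD}_{\$_0}(y - c_m)|$. Moreover $y - c_m < c_{m-2} < y$, so by minimality of $y$ the greedy and optimal representations of $y - c_m$ in $\$_0$ already agree and use only coins smaller than $c_{m-2}$. Combining the canonicity of $\$_2$ with these equalities imposes stringent numerical relations among $c_m$, $c_{m-1}$, $c_{m-2}$ and the optimal representation of $y$, mirroring the way the proof of Lemma~\ref{lem1} forces $c_4 = c_2 + c_3 - 1$ together with $r = 1$. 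Using these relations, I would exhibit an explicit new counterexample of $\$_2$ in the spirit of the witness $2qc_2 + c_2 - 1$ produced in Lemma~\ref{lem1}, yielding the desired contradiction.

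The main obstacle is producing this explicit counterexample. Although the top-of-system pinch parallels the one in Lemma~\ref{lem1}, the presence of the intermediate coins $c_4, \ldots, c_{m-1}$ means that greedy behavior at any candidate witness depends on the entire coin prefix, so selecting and verifying a new witness requires substantially more care than in the three-coin base case; in practice I expect this step to rely on Theorem~\ref{dis} (to pin down the structure of $\mathrm{OPT}_{\$_0}(y)$) and possibly Theorem~\ref{tp} to convert the canonicity constraint into a usable identity.
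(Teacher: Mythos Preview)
Your induction scheme with base case $m=4$ via Lemma~\ref{lem1} is the right skeleton, but the inductive hypothesis you carry is too weak to close the hard case, and this is precisely the obstacle you flag at the end without resolving. Knowing only that the smallest counterexample $y$ of $\$_0=\langle1,c_2,\ldots,c_{m-1}\rangle$ lies below $c_{m-2}+c_{m-1}$ means your hard case merely forces $c_m<c_{m-2}+c_{m-1}$, a window whose width is of order $c_{m-1}$ and grows with $m$. There is no analogue of the single identity $c_4=c_2+c_3-1$ that pinned everything down in Lemma~\ref{lem1}; the intermediate coins $c_4,\ldots,c_{m-2}$ remain essentially unconstrained, and neither Theorem~\ref{dis} nor Theorem~\ref{tp} will manufacture a concrete witness out of so little structure. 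The step ``using these relations, I would exhibit an explicit new counterexample'' is where the argument actually lives, and with the hypothesis as stated it does not go through.

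The paper's remedy is to \emph{load the induction}: it does not prove Theorem~\ref{cx2} directly but instead proves the stronger Theorem~\ref{cx3}, which asserts that $\$_2$ has a counterexample $x<c_m+c_3$. With this as the inductive hypothesis for $\langle1,\ldots,c_k\rangle$, the easy case becomes $c_{k+1}\ge c_k+c_3$ (the old witness $x<c_k+c_3\le c_{k+1}$ survives unchanged and already satisfies $x<c_{k+1}+c_3$), while the hard case forces $c_{k+1}-c_k<c_3$, a gap bounded by a \emph{fixed} quantity independent of $k$. One may further assume every gap $c_i-c_{i-1}$ with $4\le i\le k+1$ is below $c_3$ (otherwise an earlier instance of the hypothesis already supplies a surviving witness). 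This reduces the problem to a finite case split on the value of $c_{k+1}-c_k$ relative to $1$, $c_2$, and $c_3$; in each branch one either locates a new counterexample below $c_{k+1}+c_3$ or derives a contradiction with the non-canonicity of $\langle1,c_2,c_3\rangle$ via Theorem~\ref{tkz2}. The resulting analysis is still lengthy, but it is a bounded case enumeration rather than the open-ended search your current setup leads to.
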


Actually, we can prove the following stronger result on the
counterexamples of non-canonical systems with more than three types
of coins.

\begin{theorem}\label{cx3}
If the coin system $\$_1=\langle1,c_2,c_3\rangle$ is non-canonical,
then the coin system $\$_2=\langle1,c_2,c_3,\cdots,c_m\rangle$ is
non-canonical and there exists some counterexample $x<c_m+c_3$ for
$m\geq4$.
\end{theorem}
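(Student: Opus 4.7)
The plan is to proceed by induction on $m \geq 4$, using Lemma~\ref{lem1} as the base case. Re-examining its proof, the two counterexamples it exhibits---$c_2+c_3-1$ (when $c_4 > c_2+c_3-1$) and $2qc_2+c_2-1 = c_3+c_4-2$ (when $c_4 = c_2+c_3-1$)---are both strictly less than $c_3+c_4$, matching our bound for $m = 4$.

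For the inductive step, assume $\$' = \langle 1, c_2, \ldots, c_{m-1}\rangle$ admits a counterexample $x$ with $x < c_{m-1}+c_3$ and consider $\$_m = \langle 1, c_2, \ldots, c_m\rangle$. The easy case is $x < c_m$: adding the larger coin $c_m$ cannot alter the greedy representation of $x$ (since greedy picks the largest coin $\leq x < c_m$) while the optimum can only shrink, so $x$ remains a counterexample of $\$_m$ and $x < c_m < c_m+c_3$. The harder case is $x \geq c_m$, which forces $c_m - c_{m-1} < c_3$. I would then split on the size of $c_m$ relative to $c_2+c_3$. If $c_m \leq c_2+c_3-2$, the value $y = c_2+c_3$ itself is a counterexample: greedy takes $c_m$ first, and the remainder $y-c_m$ lies in $[2, c_2-1]$ (using $c_m > c_3$), so greedy uses only ones, yielding $|\mathrm{GRD}_{\$_m}(y)| = 1 + c_2+c_3-c_m \geq 3$, while the representation $(0,1,1,0,\ldots,0)$ gives $|\mathrm{OPT}_{\$_m}(y)| \leq 2$, and clearly $y < c_m+c_3$. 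If instead $c_m \geq c_2+c_3-1$, I would mimic the final construction in Lemma~\ref{lem1} and use $c_m+c_3-2$ as the principal candidate: greedy is forced to take $c_m$ first and then to spend many ones on the small remainder, whereas two copies of $c_3$ (augmented by ones) give a substantially shorter representation.

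The main obstacle is this last subcase, because the intermediate coins $c_4, \ldots, c_{m-1}$ may lie anywhere in $(c_3, c_m)$ and can shorten the greedy representation of $c_m+c_3-2$ (or of auxiliary candidates such as $(q+1)c_2$ and $2c_3$) sufficiently to match their optimum. Overcoming it will require a refined case analysis in which, whenever an intermediate coin rescues greedy on one candidate, one falls back to a secondary candidate (for example $c_2+c_{m-1}$ or $2c_3$) whose greedy behavior is itself forced by that rescuing coin, so that at least one of these candidates remains provably a counterexample strictly less than $c_m+c_3$.
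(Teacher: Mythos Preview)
Your inductive scaffolding matches the paper's exactly: induction on $m$, Lemma~\ref{lem1} as base case, and the ``easy'' subcase $x<c_m$ disposed of because the new coin does not interfere with greedy. Your side observation that $c_2+c_3$ is itself a counterexample whenever $c_m\le c_2+c_3-2$ is correct and pleasant, but it only handles a thin slice (it forces $c_3<c_4<\cdots<c_m<c_2+c_3$, so $m\le c_2+1$).

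The genuine gap is the remaining subcase $c_m\ge c_2+c_3-1$, which is essentially the whole theorem. Your principal candidate $c_m+c_3-2$ already fails on small examples: take $\$_1=\langle1,5,7\rangle$ (non-canonical, $q=1$, $r=2$) and $\$_m=\langle1,5,7,10,14\rangle$. The inductive counterexample for $\langle1,5,7,10\rangle$ is $14$, so we are in your hard subcase with $c_m=14\ge c_2+c_3-1=11$. But $c_m+c_3-2=19$ has $\mathrm{GRD}(19)=14+5$ of size $2$, so it is \emph{not} a counterexample; neither is your fallback $c_2+c_{m-1}=15$ (greedy $14+1$) nor $2c_3=14$ (a coin). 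The actual witness here is $2c_{m-1}=20$, which your list of secondary candidates does not reach. More generally, ``two copies of $c_3$ plus ones'' gives a representation of size $c_m-c_3$, which for large $c_m$ is far worse than greedy's $1+|\mathrm{GRD}_{\langle1,c_2\rangle}(c_3-2)|$; the comparison you want simply does not hold.

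What the paper does differently, and what your plan is missing, is a structural reduction before the case split: it first argues (by strong induction) that one may assume every gap satisfies $c_i-c_{i-1}<c_3$, and that whenever a two-coin sum $c_i+c_j$ lands strictly inside a gap $(c_s,c_{s+1})$ below $c_{k+1}+c_3$, the residue $c_i+c_j-c_s$ must itself be a coin (otherwise $c_i+c_j$ is the desired counterexample). These constraints are what let the paper case-split on the single parameter $c_{k+1}-c_k\in\{1\}\cup(1,c_2)\cup[c_2,c_3-1]$ and, in each branch, force relations like $c_4=c_3+c_2-1$ or $c_4=2c_3-1$ that eventually contradict the non-canonicality of $\langle1,c_2,c_3\rangle$ or pin down an explicit counterexample. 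Your proposal does not yet have this lever; without it the ``refined case analysis'' you allude to has no traction, because the intermediate coins remain unconstrained.
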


The proof is based on an induction on $m$ and an exhaustive
case-by-case analysis of $c_{k+1}<c_k+c_3$. The long proof is placed
in the Appendix.

\section{Coin System with Five Types of Coins}
In this section, we give a full characterization of canonical coin
systems with five types of coins.

\begin{theorem}\label{five}
A coin system $\$=\langle1,c_2,c_3,c_4,c_5\rangle$ is non-canonical
if and only if $\$$ satisfies exactly one of the following
conditions:

\begin{enumerate}
\item $\langle1,c_2,c_3\rangle$ is non-canonical.

\item $\$\neq\langle1,2,c_3,c_3+1,2c_3\rangle$.

\item $|\mathrm{GRD}_{\$}((k+1)\cdot c_4)|>k+1$ with $k\cdot c_4<c_5<(k+1)\cdot c_4$.
\end{enumerate}
\end{theorem}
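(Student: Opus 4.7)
The plan is to mirror the structure of Theorem~\ref{four}: reduce the five-coin problem to the four-coin characterization of Section~2, add one application of Theorem~\ref{th} to handle the incremental effect of $c_5$, and treat a small exceptional family separately. The new ingredient compared with Theorem~\ref{four} is the family $\langle 1,2,c_3,c_3+1,2c_3\rangle$ singled out in Condition~2: here the non-canonical four-coin prefix $\langle 1,2,c_3,c_3+1\rangle$ becomes canonical once $c_5=2c_3$ is adjoined, so this configuration has to be excluded from the ``non-canonicity is inherited'' phenomenon established for four-coin systems by Lemma~\ref{lem1}.

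For the ($\Leftarrow$) direction, I would dispatch the three conditions independently. Condition~1 gives non-canonicity of $\$$ directly by Theorem~\ref{cx2}. Condition~3 exhibits $(k+1)c_4$ as a counterexample because $(0,0,0,k+1,0)$ is a length-$(k+1)$ representation while the greedy representation is strictly longer by hypothesis. Condition~2, read together with the non-canonicity of the four-coin prefix $\langle 1,c_2,c_3,c_4\rangle$, follows from an extended form of Lemma~\ref{lem1}: the counterexample $x$ witnessing non-canonicity of the four-coin prefix is either already a counterexample of $\$$ (if $x<c_5$), or a perturbed witness in the spirit of the $2qc_2+c_2-1$ construction in Lemma~\ref{lem1} serves instead, provided $\$$ is not the excluded family.

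For the ($\Rightarrow$) direction, I would suppose $\$$ is non-canonical and inspect the four-coin prefix $\$_4=\langle 1,c_2,c_3,c_4\rangle$. If $\$_4$ is canonical, then Theorem~\ref{th} forces the smallest counterexample of $\$$ to have the form $(k+1)c_4$ with $kc_4<c_5<(k+1)c_4$, which is exactly Condition~3. If $\$_4$ is non-canonical, Theorem~\ref{four} splits the case: either the three-coin prefix $\langle 1,c_2,c_3\rangle$ is non-canonical, giving Condition~1, or $\$_4$ is non-canonical through the second clause of Theorem~\ref{four}. In the latter sub-case, I would show that the counterexample of $\$_4$ persists in $\$$ unless $\$$ matches the exceptional pattern, which is Condition~2.

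The main obstacle is the two-sided verification of the exceptional family: proving canonicity of $\langle 1,2,c_3,c_3+1,2c_3\rangle$ for every valid $c_3$, and proving that no other five-coin pattern with a non-canonical four-coin prefix can be salvaged by a suitable $c_5$. For canonicity of the family, because $c_5=2c_3$ is a small multiple of $c_3$, Theorem~\ref{dis} and the range bound of Theorem~\ref{tkz1} confine any potential counterexample to a narrow interval around $c_5$, where a direct comparison of greedy and alternative representations should close the argument; Theorem~\ref{tp} can be used to short-cut the enumeration. Ruling out other rescues means enumerating the shapes of $(c_2,c_3,c_4)$ permitted by the second clause of Theorem~\ref{four} and, in each shape, reproducing the perturbed-counterexample construction of Lemma~\ref{lem1} to keep a concrete witness alive in $\$$ for every $c_5>c_4$; this is the step I expect to be the most delicate, since the case split grows with the extra index.
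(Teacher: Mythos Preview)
Your overall decomposition matches the paper's: reduce to Theorem~\ref{four}, invoke Theorem~\ref{th} when the four-coin prefix is canonical, and isolate the content in the statement ``$\$_1=\langle1,c_2,c_3,c_4\rangle$ is non-canonical while $\$_2=\langle1,c_2,c_3,c_4,c_5\rangle$ is canonical iff $\$_2=\langle1,2,c_3,c_3+1,2c_3\rangle$ with $c_3>3$'' (this is the paper's Theorem~\ref{cx1}). Your plan for the canonicity of the exceptional family also agrees with the paper: a direct case check on the range $[c_3+2,3c_3)$ from Theorem~\ref{tkz1}, using Theorem~\ref{dis} to restrict the possible optimal representations.

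Where you diverge from the paper is the harder direction, ``no other rescues''. You propose to enumerate the shapes of $(c_2,c_3,c_4)$ allowed by the second clause of Theorem~\ref{four} and, for each, keep a Lemma~\ref{lem1}-style witness alive under every $c_5$. The difficulty is that the second clause of Theorem~\ref{four} is an arithmetic inequality on $\mathrm{GRD}_{\$}((k+1)c_3)$, not a structural classification; it does not hand you a finite list of four-coin shapes to work through, so your case split has no obvious termination. The paper avoids this by working from the other end: assuming $\$_2$ canonical, it evaluates $\mathrm{GRD}_{\$_2}(c_3+c_4)$ and $\mathrm{GRD}_{\$_2}(2c_4)$, each of which must use $c_5$ and have size~$2$, forcing
\[
c_5\in\{c_3+c_4-1,\ c_3+c_4-c_2\}\quad\text{and}\quad c_5\in\{2c_4-1,\ 2c_4-c_2,\ 2c_4-c_3\}.
\]
Only three consistent pairs survive, and each is then eliminated (via an auxiliary Lemma~\ref{lem3} and short counterexample checks) or reduced to $\langle1,2,c_3,c_3+1,2c_3\rangle$. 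This ``constrain $c_5$ first'' trick is what collapses the problem to finitely many cases; without it, your enumeration strategy is not obviously a proof.
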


The proof of this theorem is similar to that of Theorem~\ref{four}
except for the second item $\$\neq\langle1,2,c_3,c_3+1,2c_3\rangle$.
We actually need to prove the following theorem.

\begin{theorem}\label{cx1}
The coin system $\$_1=\langle1,c_2,c_3,c_4\rangle$ is non-canonical
and the coin system $\$_2=\langle1,c_2,c_3,c_4,c_5\rangle$ is
canonical if and only if $c_3>3$ and
$\$_2=\langle1,2,c_3,c_3+1,2c_3\rangle$.
\end{theorem}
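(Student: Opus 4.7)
The plan is to split into the two implications; the sufficiency direction is an explicit verification, while the necessity direction pins down each of the parameters $c_2$, $k$, $s$, $c_5$ through tailored counterexamples of $\$_2$, in the style of Lemma~\ref{lem1}.

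For sufficiency, I would assume $c_3 > 3$ and $\$_2 = \langle 1, 2, c_3, c_3+1, 2c_3\rangle$. First exhibit $y = 2c_3$ as a counterexample of $\$_1 = \langle 1, 2, c_3, c_3+1\rangle$: the greedy takes one coin of $c_3+1$ and then spends $\lceil(c_3-1)/2\rceil \geq 2$ further coins of value at most $2$, for a total of at least $3$, whereas $y = c_3 + c_3$ is a $2$-coin representation. Then verify that $\$_2$ is canonical. A short parity case analysis (or an appeal to Theorem~\ref{tp}) shows that $\$_1$ is canonical on $[0, 2c_3)$ by comparing the greedy representation using $c_3+1$ with the alternative using $c_3$; consequently, for every target $x$ one may write $x = \mu\cdot 2c_3 + r$ with $r \in [0, 2c_3)$, so that greedy under $\$_2$ yields $\mu + |\mathrm{OPT}_{\$_1}(r)|$ coins, and an exchange argument shows that any representation with $\mu' < \mu$ copies of $2c_3$ must represent the larger residue $r + 2(\mu - \mu')c_3$ under $\$_1$ at a cost of at least $|\mathrm{OPT}_{\$_1}(r)| + (\mu-\mu')$ coins, yielding no saving.

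For necessity, assume $\$_1$ is non-canonical and $\$_2$ is canonical. By Theorem~\ref{four} one of the two conditions holds for $\$_1$; if $\langle 1, c_2, c_3\rangle$ is non-canonical, Theorem~\ref{cx2} forces $\$_2$ to be non-canonical, a contradiction. Hence $\langle 1, c_2, c_3\rangle$ is canonical and $|\mathrm{GRD}_{\$_1}((k+1)c_3)| > k+1$ with $c_4 = kc_3 + s$, $1 \leq s \leq c_3 - 1$. Set $y = (k+1)c_3$. Step~1, $c_5 = y$: if $c_5 > y$ then $\mathrm{GRD}_{\$_2}(y) = \mathrm{GRD}_{\$_1}(y)$ is already non-optimal; if $c_4 < c_5 < y$ then, since $y < 2c_5$, greedy of $y$ under $\$_2$ takes exactly one $c_5$ and leaves $z = y - c_5 \in (0, c_3)$ to be paid in coins of value at most $c_2$, and comparison with the $(k+1)$-coin $c_3$-representation produces an explicit counterexample of $\$_2$ at $y$, $y-1$, or $y+1$. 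Step~2, with $c_5 = (k+1)c_3$ fixed, consider $x = 2c_4 = 2kc_3 + 2s$: greedy under $\$_2$ uses one $c_5$ and pays $2s + (k-1)c_3$ in smaller coins, costing at least $2+s$ coins when $c_2 = 2$ and strictly more when $c_2 \geq 3$ or $k \geq 2$, against the $2$-coin $c_4$-representation; canonicity therefore forces $k = s = 1$ and $c_2 = 2$, i.e.\ $\$_2 = \langle 1, 2, c_3, c_3+1, 2c_3\rangle$. Finally $c_3 > 3$ follows because $c_3 = 3$ reduces $\$_1$ to $\langle 1, 2, 3, 4\rangle$, which is canonical (the only candidate counterexamples in $(4, 7)$ by Theorem~\ref{tkz1} are $5$ and $6$, both trivially canonical), contradicting the hypothesis.

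The main obstacle is the step $c_5 = y$ in the necessity direction: between $c_4$ and $(k+1)c_3$ there can be several candidate values of $c_5$, and the witness counterexample of $\$_2$ is not always $y$ itself but may be $y-1$ or $y+1$, depending on the residue of $c_5$ modulo $c_3$ and on parity interactions with $c_2$. A uniform argument has to cover these sub-cases. Once $c_5 = y$ is in hand, the remaining parameters are eliminated by the single $x = 2c_4$ calculation and the argument becomes comparatively mechanical.
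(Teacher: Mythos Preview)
Your sufficiency argument is fine and matches the paper's in spirit. The necessity argument, however, has a genuine gap in Step~1.

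You claim that whenever $c_4 < c_5 < y := (k+1)c_3$, one of $y-1$, $y$, $y+1$ is a counterexample of $\$_2$. This is false. Take $c_2=2$, $c_3=5$, $c_4=6$ (so $k=s=1$, $y=10$), and $c_5=9$. Here $\langle 1,2,5\rangle$ is canonical and $\$_1=\langle 1,2,5,6\rangle$ is non-canonical (witness $10$), so all your hypotheses for Step~1 are in force. But
\[
\mathrm{GRD}_{\$_2}(9)=(0,0,0,0,1),\quad
\mathrm{GRD}_{\$_2}(10)=(1,0,0,0,1),\quad
\mathrm{GRD}_{\$_2}(11)=(0,1,0,0,1),
\]
all of size at most $2$, and each matches an obvious $2$-coin representation, so none of $9,10,11$ is a counterexample. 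The smallest counterexample of $\langle 1,2,5,6,9\rangle$ is $12=2c_4$, which your Step~1 never looks at. The difficulty you flagged as ``the main obstacle'' is therefore not a matter of tidying sub-cases: the whole strategy of pinning down $c_5$ \emph{first}, using only values near $(k+1)c_3$, cannot succeed.

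The paper avoids this by not separating the determination of $c_5$ from the rest. It evaluates greedy under $\$_2$ at the two $2$-coin targets $c_3+c_4$ and $2c_4$ simultaneously; canonicity of $\$_2$ forces each greedy residue to be a single coin, yielding
\[
c_5\in\{c_3+c_4-1,\ c_3+c_4-c_2\}\quad\text{and}\quad c_5\in\{2c_4-1,\ 2c_4-c_2,\ 2c_4-c_3\},
\]
and then disposes of the three compatible pairs by direct computation (one pair requires an auxiliary lemma about $\langle 1,c_2,c_3,c_2+c_3-1,c_2+2c_3-2\rangle$). In particular the target $2c_4$ enters at the very first step, which is exactly what is needed to kill the example above. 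If you want to salvage your outline, the natural fix is to bring the $2c_4$ test into Step~1 rather than reserving it for Step~2.
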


The proof is based on an exhaustive case-by-case analysis of the
smallest counterexamples of some coin systems. The proof can be found
in the Appendix.

\section{Tight Coin System}
For a coin system $\$=\langle1,c_2,\cdots,c_m,c_{m+1}\rangle$, once
there is an untight subsystem $\langle1,c_2,\cdots,c_i\rangle$ with
$i\leq m+1$, $\$$ is clearly non-canonical. Therefore, it is
necessary to decide whether a tight coin system is canonical.

\begin{theorem}\label{cx4}
Let $\$_1=\langle1,c_2,c_3\rangle$,
$\$_2=\langle1,c_2,c_3,\cdots,c_m\rangle$ and
$\$_3=\langle1,c_2,c_3,\cdots,c_m,c_{m+1}\rangle$ be three tight
coin systems such that $\$_1$ is canonical but $\$_2$ is not. If
$\$_3$ is non-canonical, then there is a counterexample
$x=c_i+c_j>c_{m+1}$ of $\$_3$ with $1<c_i\leq c_j\leq c_m$.
\end{theorem}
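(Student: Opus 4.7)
The plan is to analyze the smallest counterexample $y_0$ of $\$_3$ and show that it realizes the desired decomposition $c_i+c_j$. Let $x_0$ denote the smallest counterexample of $\$_2$. From Theorem~\ref{tkz1} applied to $\$_2$ we have $x_0<c_{m-1}+c_m<2c_m$, and tightness of $\$_3$ forces $c_{m+1}\leq x_0$ (otherwise $x_0<c_{m+1}$ would be a counterexample of $\$_3$ too, contradicting tightness); hence the key bound $c_m<c_{m+1}<2c_m$ holds. Combining tightness of $\$_3$ with Theorem~\ref{tkz1} applied to $\$_3$ gives $c_{m+1}<y_0<c_m+c_{m+1}<3c_m$.

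Since $y_0>c_{m+1}$, the greedy representation $\mathrm{GRD}_{\$_3}(y_0)$ uses $c_{m+1}$; Theorem~\ref{dis} then implies $\mathrm{OPT}_{\$_3}(y_0)$ avoids $c_{m+1}$ entirely, so it is supported on $\{1,c_2,\ldots,c_m\}$. Invoking Theorem~\ref{tp} (Pearson) on $\$_3$ produces $\mathrm{OPT}_{\$_3}(y_0)=(0,\ldots,0,\beta_l,\ldots,\beta_r,0)$ with $l\leq r\leq m$, and one checks $l\geq 2$: if $l=1$ then Pearson's formula would force $y_0=c_{r+1}$, a coin, contradicting the fact that $y_0$ is a counterexample.

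The centerpiece of the argument is the claim $|\mathrm{OPT}_{\$_3}(y_0)|=2$. Granted this, either $y_0=2c_l$ (when $l=r$) or $y_0=c_l+c_r$ (when $l<r$); since $c_l,c_r\in[c_2,c_m]$ and $y_0>c_{m+1}$, $y_0$ is already the desired counterexample. In the sub-case $l=r$, Pearson combined with $c_{m+1}<2c_m$ forces $l=r=m$ with $\beta_m=2$, giving $y_0=2c_m$. In the sub-case $l<r$, Pearson pins $\beta_r$ to the coefficient of $c_r$ in $\mathrm{GRD}_{\$_3}(c_{r+1}-1)$, and the tightness of $\$_2$ (so that greedy equals optimal for values below $c_m$) sharply restricts the admissible coefficients; if $r=m$ then $\beta_m=1$, and a $|\mathrm{OPT}|=2$ representation takes the form $y_0=c_l+c_m$.

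The main obstacle is excluding $|\mathrm{OPT}_{\$_3}(y_0)|\geq 3$. The plan is to assume such a representation exists and derive a contradiction with the minimality of $y_0$. Concretely, strip off one coin $c_i$ from $\mathrm{OPT}_{\$_3}(y_0)$: the value $y_0-c_i<y_0$ cannot be a counterexample, so by tightness its greedy and optimal representations coincide at size $|\mathrm{OPT}_{\$_3}(y_0)|-1$. Applying Pearson's relation together with the inequality $c_{m+1}<2c_m$, one can then exhibit a 2-coin representation of a value strictly smaller than $y_0$ that is itself a counterexample, contradicting the minimality. This case analysis splits according to whether $r=m$ or $r<m$, and whether $\beta_l=1$ or $\beta_l\geq 2$, and is expected to be the lengthiest portion of the proof, in the spirit of the case work already deferred to the appendix for Theorems~\ref{cx2} and~\ref{cx3}.
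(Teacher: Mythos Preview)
Your preliminary observations are sound: the bound $c_{m+1}<2c_m$ follows from tightness of $\$_3$ and Theorem~\ref{tkz1} for $\$_2$, the optimal representation of $y_0$ avoids $c_{m+1}$ by Theorem~\ref{dis}, and the Pearson argument for $l\geq 2$ is correct. The difficulty is exactly where you locate it, and this is where the proposal breaks down.

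Your plan is to prove the \emph{stronger} statement that the smallest counterexample $y_0$ itself satisfies $|\mathrm{OPT}_{\$_3}(y_0)|=2$. The paper does \emph{not} prove this; it only shows that \emph{some} counterexample is a sum of two coins. In fact Lemma~\ref{cx5} needs the extra hypothesis ``no $c_m+c_i$ or $c_{m-1}+c_j$ is a counterexample'' precisely to force the smallest counterexample to be a two-coin sum, which strongly suggests that without that hypothesis the smallest counterexample need not have optimal size~$2$. Your sketch for excluding $|\mathrm{OPT}_{\$_3}(y_0)|\geq 3$ (``strip off a coin, then exhibit a smaller 2-coin counterexample via Pearson and $c_{m+1}<2c_m$'') never specifies which value is the smaller counterexample or why greedy fails on it; as written it is a promissory note rather than an argument.

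The paper's route is structurally different. It argues by contradiction: \emph{assume} that no $c_i+c_j>c_{m+1}$ with $1<c_i\leq c_j\leq c_m$ is a counterexample, and keep this negated conclusion as an active hypothesis. The first payoff is Lemma~\ref{lem5}: under that assumption the last gap $d_{m+1}=c_{m+1}-c_m$ is the maximum gap in the whole system. This is considerably sharper than your bound $c_{m+1}<2c_m$, and it is what drives the subsequent case analysis. The paper then studies the smallest counterexample $x$ through the triple $(c_s,c_l,c_h)$, where $c_s$ is the largest coin occurring in \emph{any} optimal representation of $x$; the standing assumption is used repeatedly, for instance to write $c_h+c_m=c_{m+1}+c_i$ (because $c_h+c_m$ cannot be a counterexample, hence greedy is optimal on it), which then lets one swap coins and contradict the maximality of $c_s$. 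None of these swaps are available in your direct approach, because you never assume that two-coin sums are non-counterexamples. If you want to repair the proposal, the cleanest fix is to switch to the contradiction framework and establish the gap lemma $d_{m+1}=\max_i d_i$ first; Pearson is then unnecessary.
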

To establish Theorem~\ref{cx4}, we first prove Lemma~\ref{lem5}.
Here, we define $c_0=0$ and $d_i:=c_i-c_{i-1}$ with $1\leq i\leq
m+1$.


\begin{lemma}\label{lem5}
Let $\$_1=\langle1,c_2,c_3\rangle$,
$\$_2=\langle1,c_2,c_3,\cdots,c_m\rangle$ and
$\$_3=\langle1,c_2,c_3,\cdots,c_m,c_{m+1}\rangle$ be three tight
coin systems. $\$_1$ is canonical but both $\$_2$ and $\$_3$ are
not. If any $c_m+c_i>c_{m+1}$ is not the counterexample of $\$_3$
with $1<c_i\leq c_m$, then $d_{m+1}=\max\{d_i\ |\ 1\leq i\leq
m+1\}$.
\end{lemma}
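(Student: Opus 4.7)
The plan is to argue the contrapositive: assuming some $d_j > d_{m+1}$ for $j \leq m$, I will exhibit a coin $c_i$ so that $c_m+c_i$ is forced to be a counterexample of $\$_3$, contradicting the hypothesis.

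The key intermediate claim I would establish first is the following: for every coin $c_i$ with $1 < c_i \leq c_m$ and $c_m + c_i > c_{m+1}$, the difference $c_i - d_{m+1}$ must itself be a coin of $\$_3$. To see this, I would unpack $\mathrm{GRD}_{\$_3}(c_m + c_i)$ directly. Since $c_i \leq c_m < c_{m+1}$, we have $c_{m+1} \leq c_m + c_i < 2c_{m+1}$, so greedy consumes exactly one coin of value $c_{m+1}$, leaving the positive remainder $c_i - d_{m+1}$ lying in the open interval $(0, c_m)$. Hence $|\mathrm{GRD}_{\$_3}(c_m + c_i)| = 1 + |\mathrm{GRD}_{\$_3}(c_i - d_{m+1})| \geq 2$, with equality iff $c_i - d_{m+1}$ is itself a coin. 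On the other side, the two-coin representation using $c_i$ together with $c_m$ shows $|\mathrm{OPT}_{\$_3}(c_m + c_i)| \leq 2$, so the non-counterexample hypothesis forces greedy size exactly $2$, proving the claim.

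To close the argument, I would pick $j$ with $d_j > d_{m+1}$. Since $d_1 = 1 \leq d_{m+1}$ we must have $j \geq 2$, and trivially $j \leq m$. Taking $c_i := c_j$ then satisfies $1 < c_i \leq c_m$ and $c_i \geq d_j > d_{m+1}$, so $c_m + c_i > c_{m+1}$, and the intermediate claim forces $c_j - d_{m+1}$ to be a coin of $\$_3$. But $c_{j-1} < c_j - d_{m+1} < c_j$ places this value strictly between two consecutive coins, which is impossible. The bulk of the argument is the greedy/optimal comparison yielding the constraint on $c_i - d_{m+1}$; no substantial obstacle arises afterward, provided one verifies the boundary conditions $c_m + c_i < 2c_{m+1}$ and $c_i - d_{m+1} > 0$, both of which are immediate from $c_i \leq c_m$ and $c_i > d_{m+1}$.
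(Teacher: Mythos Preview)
Your argument is correct and follows essentially the same route as the paper: both prove the contrapositive by taking the index $j$ with $d_j>d_{m+1}$, observing that $c_m+c_j>c_{m+1}$, and computing that the greedy remainder $c_m+c_j-c_{m+1}=c_j-d_{m+1}$ lies strictly between the consecutive coins $c_{j-1}$ and $c_j$, forcing $|\mathrm{GRD}_{\$_3}(c_m+c_j)|\geq 3>2\geq|\mathrm{OPT}_{\$_3}(c_m+c_j)|$. Your extra intermediate claim (that $c_i-d_{m+1}$ must be a coin for every admissible $c_i$) is a mild generalization the paper does not isolate, but the specialization to $c_i=c_j$ reproduces the paper's proof verbatim up to reindexing.
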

\begin{proof}
Assume that there is some $d_{j+1}>d_{m+1}$ with $0<j<m$. Without
loss of generality, let $d_{j+1}=d_{m+1}+\kappa$ with
$0<\kappa<d_{j+1}$.

For $c_{j+1}+c_m$, we have
$c_{j+1}+c_m=c_j+d_{j+1}+c_m=c_j+\kappa+c_{m+1}$. Since
$c_j+\kappa\in(c_j,c_{j+1})$, we have that $c_{j+1}+c_m$ is a
counterexample of $\$_3$. This is a contradiction. 
\end{proof}

\begin{proof}[Proof of Theorem~\ref{cx4}]
Assume that any $x=c_i+c_j>c_{m+1}$ with $1<c_i\leq c_j\leq c_m$ is
not the counterexample of $\$_3$. By Lemma \ref{lem5} and the
assumption, $d_{m+1}=\max\{d_i\ |\ 1\leq i\leq m+1\}$. For
simplicity, we introduce some notations.

\begin{itemize}
\item $x=c_{m+1}+\delta$ is the smallest counterexample of
$\$_3$.

\item $c_s$ is the largest coin used in all the optimal
representations of $x$ with $s\leq m$.

\item $c_l$ and $c_h$ are the smallest coin and the largest coin used
in the optimal representation of $x-c_s$ respectively.
\end{itemize}
Thus, we have $c_l>x-c_{s+1}$ by definition of $c_s$.

\begin{enumerate}[(1)]
\item If $s\leq m-1$, then $c_l>x-c_m=\delta+d_{m+1}$.

\begin{itemize}
\item If $c_s+c_l<x$, then $\mathrm{OPT}_{\$_3}(x)$ uses one coin $c_h$
besides a coin $c_l$ and a coin $c_s$. Since
$|\mathrm{GRD}_{\$_3}(x-c_h)|=|\mathrm{OPT}_{\$_3}(x-c_h)|$ and
$x-c_h\geq c_s+c_l>c_{s+1}$, we have $c_{s+1}$ appears in
$\mathrm{OPT}_{\$_3}(x)$, a contradiction.

\item Otherwise, $c_s+c_l=x>c_{m+1}$. By the assumption, there is
$c_{m+1}+c_i=c_s+c_l$, a contradiction.
\end{itemize}

\item Otherwise, $s=m$.

\begin{itemize}
\item If $c_h+c_m>c_{m+1}$, there exists $c_{m+1}+c_i=c_h+c_s$ by the
assumption. Thus, we can get a new representation of $x$ replacing
$c_h$ and $c_m$ with $c_{m+1}$ and $c_i$ in
$\mathrm{OPT}_{\$_3}(x)$. It is easy to see this new representation
uses the coin $c_{m+1}$ and remains optimality, a contradiction.

\item If $c_h+c_m<c_{m+1}$, then $\mathrm{OPT}_{\$_3}(x)$ uses
one coin $c_l$ besides a coin $c_h$ and a coin $c_m$. Since $s=m$,
we have $c_l>x-c_{m+1}$, i.e., $\delta<c_l\leq c_h<d_{m+1}$. Thus,
$c_m+\delta\in(c_m,c_{m+1})$. It is easy to see
$|\mathrm{GRD}_{\$_3}(x)|>|\mathrm{OPT}_{\$_3}(x)|=|\mathrm{GRD}_{\$_2}(x)|$.
By the assumption, there is $c_{m+1}+c_i=2c_m$. Then we have
\[
\begin{array}{rcl}
|\mathrm{GRD}_{\$_3}(c_m+\delta)|&=&|\mathrm{GRD}_{\$_3}(x)|\\
&>&1+|\mathrm{GRD}_{\$_3}(c_m+\delta-c_i)|\\
\end{array}
\]
It implies $c_m+\delta$ is a counterexample of $\$_3$, a
contradiction.

\item If $c_h+c_m=c_{m+1}$, then we can get a new representation of $x$
replacing $c_h$ and $c_m$ with $c_{m+1}$ in
$\mathrm{OPT}_{\$_3}(x)$. It is easy to see this new representation
has the smaller size, a contradiction.
\end{itemize}
\end{enumerate}
\end{proof}

\section{The Algorithm}
In this section, we present an $O(m^2)$ algorithm that decides
whether a tight coin system
$\$=\langle1,c_2,\cdots,c_m,c_{m+1}\rangle$ with $m\geq5$ is
canonical. By Theorem~\ref{cx4}, we have if $\$$ is non-canonical,
then there is a counterexample that is the sum of two coins.\medskip

\begin{center}
\parbox[c]{8cm}{
\begin{algorithm}[H]
\caption{IsCanonical}
\begin{algorithmic}[1]
\REQUIRE a tight coin system
  $\$=\langle1,c_2,\cdots,c_m,c_{m+1}\rangle$ with $m\geq5$

  \IF{$0<r<c_2-q$ with $c_3=qc_2+r$}
    \RETURN $\$$ is non-canonical
  \ELSE
    \FOR{$i=m$ downto $1$}
      \FOR{$j=i$ downto $1$}
        \IF{$c_i+c_j>c_{m+1}$ and $|\mathrm{GRD}_{\$}(c_i+c_j)|>2$}
          \RETURN $\$$ is non-canonical
        \ENDIF
      \ENDFOR
    \ENDFOR
    \RETURN $\$$ is canonical
  \ENDIF
\end{algorithmic}
\end{algorithm}
}
\end{center}
\medskip

In addition, we observe this algorithm can deal with most tight coin
systems in $2m$ steps, except a small number of non-canonical coin
systems, and they are almost arithmetic progressions, for example,
\[
\begin{array}{l}
\$=\langle1,2,\cdots,12,14,15,\cdots,20,21,24,25,26,28,29,30,39\rangle\\
\end{array}
\]
Moreover, we can characterize the smallest counterexample of such
non-canonical coin system.

\begin{lemma}\label{cx5}
Let $\$_1=\langle1,c_2,c_3\rangle$,
$\$_2=\langle1,c_2,c_3,\cdots,c_m\rangle$ and
$\$_3=\langle1,c_2,c_3,\cdots,c_m,c_{m+1}\rangle$ be three tight
coin systems such that $\$_1$ is canonical but $\$_2$ and $\$_3$ are
not. If $\$_3$ has no counterexample $x$ such that
$x=c_m+c_i>c_{m+1}$ or $x=c_{m-1}+c_j>c_{m+1}$ with $1<c_i,c_j\leq
c_m$, then the smallest counterexample of $\$_3$ is the sum of two
coins.
\end{lemma}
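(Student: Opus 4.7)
The plan is to argue by contradiction. Let $x=c_{m+1}+\delta$ be the smallest counterexample of $\$_3$, which by Theorem~\ref{tkz1} satisfies $0<\delta<c_m$; I assume $|\mathrm{OPT}_{\$_3}(x)|\geq 3$ in order to derive a contradiction. Reusing the notation from the proof of Theorem~\ref{cx4}, let $c_s$ be the largest coin appearing in any optimal representation of $x$, so $s\leq m$ by Theorem~\ref{dis}; let $c_l,c_h$ be the smallest and largest coins in an optimal representation of $x-c_s$, so that $c_l>x-c_{s+1}$ by maximality of $c_s$. The half of the cx5 hypothesis that forbids counterexamples of the form $c_m+c_i>c_{m+1}$ coincides with the hypothesis of Lemma~\ref{lem5}, so that lemma applies and gives $d_{m+1}=\max\{d_i:1\leq i\leq m+1\}$, a fact I will exploit throughout.

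The cases $s\leq m-1$, and $s=m$ with $c_h+c_m\geq c_{m+1}$, can be handled essentially as in the proof of Theorem~\ref{cx4}. For $s\leq m-1$, the subcase $c_s+c_l=x$ already exhibits $x$ as a two-coin sum, contradicting $|\mathrm{OPT}_{\$_3}(x)|\geq 3$, while $c_s+c_l<x$ lets the minimality of $x$ combined with $x-c_h>c_{s+1}$ force an optimal representation of $x$ using a coin strictly larger than $c_s$. For $s=m$ with $c_h+c_m=c_{m+1}$, the substitution $c_h,c_m\mapsto c_{m+1}$ strictly shortens $\mathrm{OPT}_{\$_3}(x)$, contradicting optimality; and for $s=m$ with $c_h+c_m>c_{m+1}$, the hypothesis supplies a coin $c_i$ with $c_h+c_m=c_{m+1}+c_i$, so substituting $c_h,c_m\mapsto c_{m+1},c_i$ in $\mathrm{OPT}_{\$_3}(x)$ yields an optimal representation using $c_{m+1}$, contradicting $s=m$.

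The main obstacle is the remaining subcase $s=m$ with $c_h+c_m<c_{m+1}$, where $\delta<c_l\leq c_h<d_{m+1}$. When $2c_m>c_{m+1}$, I will follow cx4 and apply the hypothesis to $2c_m$, obtaining a coin $c_i=2c_m-c_{m+1}$, and then show that $c_m+\delta$ is a counterexample of $\$_3$ strictly smaller than $x$, contradicting minimality. The genuinely new case is $2c_m\leq c_{m+1}$, which is ruled out by cx4's stronger hypothesis but can occur here; this is precisely where the second half of the cx5 hypothesis — no counterexample of the form $c_{m-1}+c_j>c_{m+1}$ — must enter. The plan is to pick an appropriate coin $c_j\leq c_m$ with $c_{m-1}+c_j>c_{m+1}$, invoke the hypothesis to produce a coin $c_k$ with $c_{m-1}+c_j=c_{m+1}+c_k$, and then substitute inside $\mathrm{OPT}_{\$_3}(x)$ — most plausibly by swapping $c_m$ for $c_{m-1}$ together with some coin drawn from $\{c_l,c_h\}$ — to either shorten the representation or introduce $c_{m+1}$. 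The hardest step, I expect, is verifying numerically, using the maximality of $d_{m+1}$ together with the bounds $\delta<c_l\leq c_h<d_{m+1}$, that such a coin $c_j$ is available and that the resulting substitution is admissible inside $\mathrm{OPT}_{\$_3}(x)$.
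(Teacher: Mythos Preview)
Your plan tracks the paper's proof almost exactly in its case analysis: the paper uses the same notation $c_s,c_l,c_h$, the same split on $s\leq m-1$ versus $s=m$, and within $s=m$ the same trichotomy on $c_h+c_m$ versus $c_{m+1}$. So the overall architecture is right.

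The one substantive discrepancy is your ``genuinely new case'' $2c_m\leq c_{m+1}$. This case is in fact vacuous under the stated hypotheses, and the paper simply writes ``By the hypothesis, we have $c_{m+1}+c_i=2c_m$'' without further comment. The reason is this: since $\$_2$ is non-canonical, Theorem~\ref{tkz1} gives it a smallest counterexample $y<c_{m-1}+c_m<2c_m$. If $y<c_{m+1}$, then greedy on $y$ is the same in $\$_2$ and $\$_3$, so $y$ is a counterexample of $\$_3$ below $c_{m+1}$, contradicting tightness of $\$_3$. Hence $c_{m+1}\leq y<2c_m$, and in particular $2c_m>c_{m+1}$ always holds. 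So you never need the second half of the hypothesis (about $c_{m-1}+c_j$) to finish the subcase $c_h+c_m<c_{m+1}$; the $c_m+c_i$ half applied to $2c_m$ suffices, exactly as in the proof of Theorem~\ref{cx4}.

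In short: your proposal is correct once you drop the phantom case $2c_m\leq c_{m+1}$, and then it coincides with the paper's argument. (It is worth noting, as you implicitly suspected, that the paper's own proof never visibly uses the $c_{m-1}+c_j$ clause of the hypothesis.)
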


\begin{proof}
Modifying the proof of Theorem~\ref{cx4} slightly, it is easy to get
this proof.
\end{proof}

\section{Conclusion}
In this paper, we have given an $O(m^2)$ algorithm that decides
whether a tight coin system is canonical. Although our algorithm can
only handle tight coin systems, it is more efficient than Pearson's
algorithm. As some future work, we expect to obtain an $O(m)$
algorithm for tight canonical coin systems and an $o(m^3)$ general
algorithm.

We have shown a sufficient and necessary condition of canonical coin
systems with four or five types of coins by a novel method.
Meantime, we have also obtained a sufficient condition of
non-canonical coin systems. Many algorithm including Pearson's can
benefit from it. However, it is still left open to give full
characterizations of canonical coin systems with more than five
types of coins. It is a challenge to explore the corresponding
necessary condition in the future.

\section*{Acknowledgment}
We would like to thank Yiyuan Zheng for his valuable suggestions and
discussion.

\section*{Appendix}

\begin{proof}[Proof of Theorem~\ref{cx3}]
The proof is based on the induction on $m$ and the exhaustive
case-by-case analysis of $c_{m+1}<c_m+c_3$.

The result is trivial for $m=4$ by Lemma~\ref{lem1} and Theorem
\ref{tkz1}. Now assume that
$\$_2=\langle1,c_2,c_3,\cdots,c_k\rangle$ is non-canonical and there
exists some counterexample $x<c_k+c_3$.

Then we will check
$\$_2'=\langle1,c_2,c_3,\cdots,c_k,c_{k+1}\rangle$ based on a
detailed analysis of the various cases for $c_{k+1}$. Here, we only
consider the non-trivial cases as follows:

\begin{itemize}
\item $x>c_k$. Otherwise, $x$ is also
a counterexample of $\$_2'$ and $x<c_{k+1}+c_3$.

\item $c_i-c_{i-1}<c_3$ for $i\in[4,k+1]$.
Otherwise, there must be some integer $y$ such that $y$ is a
counterexample of $\$_2'$ and $y<c_{k+1}+c_3$ by the previous
assumption.

\item Once $c_i+c_j\in(c_s,c_{s+1})$ and $c_i+c_j<c_{k+1}+c_3$
with $i\leq j<s\leq k+1$, we have
$c_i+c_j-c_s\in\{1,c_2,c_3,\cdots,c_m\}$. Otherwise, $c_i+c_j$ is a
counterexample of $\$_2'$ and $c_i+c_j<c_{k+1}+c_3$.
\end{itemize}

Next, we will analyze $c_{k+1}<c_k+c_3$ exhaustively, and either
find a counterexample of $\$_2'$ or exclude a case for a
contradiction.

\begin{enumerate}[(1)]
\item If $c_{k+1}=c_k+1$, then $c_k+c_2$ is a
counterexample of $\$_2'$ and $c_k+c_2<c_{k+1}+c_3$.

\item If $c_{k+1}=c_k+\kappa$ with $\kappa\in(1,c_2)$,
then $c_k+c_2\in(c_{k+1},c_{k+1}+c_2-1)$ and
$c_k+c_3\in(c_{k+1},c_{k+1}+c_2-1)$. Thus, we have
\[
\begin{array}{rcl}
c_k+c_2-c_{k+1}&=&c_2-\kappa\in\{1\}\\
c_k+c_3-c_{k+1}&=&c_3-\kappa\in\{1,c_2\}\\
\end{array}
\]
It is easy to see $2c_2=c_3+1$ for $c_3-\kappa>c_2-\kappa$. Thus,
$\$_1$ is canonical by Theorem~\ref{tkz2}, a contradiction.

\item If $c_{k+1}=c_k+c_3-\kappa$ with $\kappa\in(0,c_3-c_2]$,
then $c_k+c_3\in(c_{k+1},c_{k+1}+c_3-c_2]$ and
$c_k+c_4\in(c_{k+1},c_{k+1}+c_4-c_2]$. Thus, we have
\[
\begin{array}{rcl}
c_k+c_3-c_{k+1}&=&\kappa\in\{1,c_2\}\\
c_k+c_4-c_{k+1}&=&c_4-c_3+\kappa\in\{1,c_2,c_3\}\\
\end{array}
\]

If $\kappa=c_2$, then it is easy to see $2c_2\leq c_3$ and
$c_4=2c_3-c_2\geq c_2+c_3$. By Theorem~\ref{tkz1}, the smallest
counterexample $y$ of $\$_1$ is also a counterexample of $\$_2'$ and
$y<c_{k+1}+c_3$.\medskip

If $\kappa=1$, then $c_{k+1}=c_k+c_3-1$ and
$c_4-c_3+1\in\{c_2,c_3\}$.

\begin{enumerate}
\item If $c_4-c_3+1=c_3$, then $c_4=2c_3-1$.
Since $\$_1$ is non-canonical, by the proof of Theorem \ref{tkz2},
we have the smallest counterexample $y=(q+1)c_2$ of $\$_1$ with
$0<r<c_2-q$. Replacing $c_3$ with $qc_2+r$, it is easy to see
$c_4=2qc_2+2r-1>y$. Thus, $y$ is a counterexample of $\$_2'$ and
$y<c_{k+1}+c_3$.

\item If $c_4-c_3+1=c_2$, then $c_4=(q+1)c_2+r-1$ with $c_3=qc_2+r$.
Similarly, we have the smallest counterexample $y=(q+1)c_2$ of
$\$_1$ with $0<r<c_2-q$.\medskip

If $r>1$, then $y$ is a counterexample of $\$_2'$ and
$y<c_{k+1}+c_3$.\smallskip

If $r=1$, then $c_3=qc_2+1$ and $c_4=(q+1)c_2$. Since
$c_{k-1}>c_k-c_3$ and $c_{k+1}-c_k=c_3-1$, we have
$c_{k-1}+c_4\in(c_k+c_2-1,c_k+c_4)$ and $c_k+c_4>c_{k+1}$.\smallskip

\begin{itemize}
\item If $c_{k-1}+c_4\in(c_k+c_2-1,c_{k+1})$, then
$c_{k-1}+c_4-c_k\in\{c_2\}$, i.e.,
$c_k-c_{k-1}=c_{k+1}-c_k=c_4-c_2=qc_2$. Since $c_{k-1}<c_{k-2}+c_3$,
we have $c_{k-2}+c_3\in(c_{k-1},c_k]$.\medskip

If $c_{k-2}+c_3=c_k$, then $c_{k-1}-c_{k-2}=1$. It is easy to see
$c_{k-2}+c_2\in(c_{k-1},c_k)$ is a counterexample of
$\$_2'$.\smallskip

If $c_{k-2}+c_3<c_k$, then $c_{k-2}+c_3-c_{k-1}\in\{1,c_2\}$.

\begin{itemize}
\item If $c_{k-2}+c_3-c_{k-1}=c_2$, then
$c_{k-1}-c_{k-2}=c_3-c_2$.\medskip

If $c_3-c_2>c_2$, then $2c_3=c_4+c_2$. Since $c_4=(q+1)c_2$, we have
$c_2=\frac{2}{2-q}$ and $q\geq1$, i.e., $c_2=2$. However,
$\$_1=\langle1,2,3\rangle$ is canonical, a contradiction.\smallskip

If $c_3-c_2<c_2$, then $q=1$ and $c_3=c_2+1$ and $c_4=2c_2$. Thus,
$c_{k-2}+c_2\in(c_{k-1},c_k)$ is a counterexample of
$\$_2'$.\medskip

\item If $c_{k-2}+c_3-c_{k-1}=1$, then
$c_{k-1}-c_{k-2}=c_3-1$.

\begin{pspicture}(1,-0.3)(13,1.3)
\uput[0](1,0.8){$\cdots$}

\uput[0](2,0.8){$\scriptstyle c_{k-3}$}

\uput[0](3,0.8){$\scriptstyle c_{k-2}$}

\uput[0](4,0.8){$\scriptstyle c_{k-1}$}

\uput[0](5,0.8){$\scriptstyle c_k$}

\uput[0](6,0.8){$\scriptstyle c_{k+1}$}

\pnode(1.5,0.5){O}

\dotnode(2.5,0.5){A}

\dotnode(3.5,0.5){B}

\dotnode(4.5,0.5){C}

\dotnode(5.5,0.5){D}

\dotnode(6.5,0.5){E}

\ncline{O}{A}

\ncline{A}{B}

\ncline{B}{C}

\ncline{C}{D}

\ncline{D}{E}

\uput[0](2.32,0.2){$\underbrace{\hspace{1cm}}_{\scriptstyle ?}$}

\uput[0](3.32,0.2){$\underbrace{\hspace{1cm}}_{\scriptscriptstyle
c_3-1}$}

\uput[0](4.32,0.2){$\underbrace{\hspace{1cm}}_{\scriptscriptstyle
c_3-1}$}

\uput[0](5.32,0.2){$\underbrace{\hspace{1cm}}_{\scriptscriptstyle
c_3-1}$}

\end{pspicture}

If $c_{k-2}-c_{k-3}<c_3-1$, then there is a counterexample $z$ of
$\langle1,c_2,\cdots,c_{k-3}\rangle$ such that $z<c_{k-3}+c_3\leq
c_{k-1}$ by the previous assumption.\smallskip

\begin{itemize}
\item If $z\!<\!c_{k-2}$, then $z\!<\!c_{k+1}\!+c_3$ is a counterexample
of~$\$_2'$.

\item If $c_{k-2}\leq z<c_{k-3}+c_3-1$, then
$c_{k-2}+z-c_{k-3}\in(c_{k-2},c_{k-1})$, which is a counterexample
of $\$_2'$.

\item If $z=c_{k-3}+c_3-1$, then $c_{k+1}+c_3-1$ is a counterexample
of $\$_2'$.\smallskip
\end{itemize}

Otherwise, $c_{k-2}-c_{k-3}=c_3-1$. Similarly, we either find a
counterexample of $\$_2'$ or obtain
\[
c_{k-3}-c_{k-4}=\cdots=c_4-c_3=c_3-1=qc_2.
\]
Thus, $c_2=\frac{1}{1-q}$, a contradiction.\medskip
\end{itemize}

\item If $c_{k-1}+c_4=c_{k+1}$, then $c_k-c_{k-1}=c_2$. Since
$c_{k-1}+c_3=c_k+c_3-c_2\in(c_k,c_{k+1})$, we have
$c_3-c_2\in\{1,c_2\}$.

\begin{itemize}
\item If $c_3-c_2=c_2$, then
$c_3=2c_2$, a contradiction.

\item If $c_3-c_2=1$, then $c_3=c_2+1$ and $c_4=2c_2$. Since
$c_{k-2}+c_3\in(c_{k-1},c_k)$, we have
$c_{k-2}+c_3-c_{k-1}\in\{1\}$, i.e., $c_{k-1}-c_{k-2}=c_2$. Similar
to the above proof, we either find a counterexample of $\$_2'$ or
obtain
\[
c_{k-2}-c_{k-3}=\cdots=c_4-c_3=c_2.
\]
It is a contradiction.\medskip
\end{itemize}

\item If $c_{k-1}+c_4\in(c_{k+1},c_{k+1}+c_2)$, then
$c_{k-1}+c_4-c_{k+1}\in\{1\}$. Since $c_{k+1}=c_k+c_3-1$, we have
$c_k-c_{k-1}=c_2-1$, and $c_{k-1}+c_3=c_k+c_3+1-c_2$. Since
$c_3+1-c_2\in\{c_2\}$, we have $c_3=2c_2-1$, a contradiction.
\end{itemize}
\end{enumerate}
\end{enumerate}
\end{proof}
\medskip

\begin{proof}[Proof of Theorem~\ref{cx1}]
To show the above theorem, we first need to prove the following
lemma.

\begin{lemma}\label{lem3}
Let $\$_1=\langle1,c_2,c_3,c_2+c_3-1\rangle$ and
$\$_2=\langle1,c_2,c_3,c_2+c_3-1,c_2+2c_3-2\rangle$ be two coin
systems such that $\$_2$ is canonical but $\$_1$ is not. Then the
coin system $\$_3=\langle1,c_2,c_3\rangle$ is canonical.
\end{lemma}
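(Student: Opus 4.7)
My plan is to prove the contrapositive: assume $\$_3$ is non-canonical and derive that $\$_2$ is also non-canonical. This is in fact slightly stronger than the lemma requires, since it does not invoke the hypothesis that $\$_1$ is non-canonical, but it immediately implies the stated result.

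First I would apply Theorem~\ref{tkz2}: since $\$_3$ is non-canonical, one can write $c_3 = qc_2 + r$ with $0 < r < c_2 - q$. Because $c_3 > c_2$, necessarily $q \geq 1$, and combining $r \geq 1$ with $r < c_2 - q$ forces $c_2 \geq q + r + 1 \geq 3$. These inequalities in turn locate $c_4 = c_2 + c_3 - 1$ and $c_5 = c_2 + 2c_3 - 2$ relative to $2c_3$: one checks that $c_4 < 2c_3$ (from $c_2 - 1 < c_3$) and $c_5 > 2c_3$ (from $c_2 - 2 \geq 1$), so that the largest coin of $\$_2$ not exceeding $2c_3$ is precisely $c_4$.

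The witness to the non-canonicity of $\$_2$ will be $x = 2c_3$, which trivially admits the representation $(0,0,2,0,0)$ of size $2$. The main step is then to compute $\mathrm{GRD}_{\$_2}(2c_3)$: greedy takes one coin of $c_4$, leaving residue $c_3 - c_2 + 1 = (q-1)c_2 + (r+1)$; since $r + 1 \leq c_2 - q < c_2$, the greedy next uses $q - 1$ copies of $c_2$ and finally $r + 1$ ones, for a total size of $q + r + 1 \geq 3$. Since this exceeds $2$, $2c_3$ is a counterexample of $\$_2$, contradicting canonicity. The only mildly delicate point is verifying the greedy behaviour on the residue---specifically that $c_3$ is skipped and that $c_2$ is selected exactly $q-1$ times---but both facts follow directly from the bound $0 < r < c_2 - q$ already in hand, so there is no substantive obstacle beyond this routine bookkeeping.
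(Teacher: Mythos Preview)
Your argument is correct, and in fact proves a cleaner statement than the paper does: you show directly that whenever $\$_3$ is non-canonical, $2c_3$ is a counterexample for $\$_2$, without ever touching the hypothesis on $\$_1$. The verification goes through as you describe---the key bookkeeping points are that $c_2\ge q+r+1\ge 3$ forces $c_4<2c_3<c_5$, and that the residue $c_3-c_2+1=(q-1)c_2+(r+1)$ satisfies $0<r+1<c_2$, so greedy uses exactly one $c_4$, then $q-1$ copies of $c_2$, then $r+1$ ones, giving size $q+r+1\ge 3>2$.

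The paper's own proof takes a more indirect route. It argues by locating the \emph{smallest} counterexample $y$ of $\$_3$ via Theorem~\ref{tkz1}, and observes that unless $y$ coincides with the new coin $c_4=c_2+c_3-1$, it survives as a counterexample of $\$_2$. In the remaining boundary case $y=c_4$, the paper explicitly invokes the non-canonicity of $\$_1$ to locate a further counterexample $x$ and then rules out subcases by comparing greedy and optimal representations of $x=c_2+2c_3-2$. Your approach bypasses this entire second layer by exhibiting $2c_3$ as a uniform witness, which is both shorter and yields a stronger lemma (the hypothesis on $\$_1$ is redundant). The paper's method, on the other hand, is closer in spirit to the general machinery of Theorems~\ref{tkz1} and~\ref{dis} used elsewhere, and does not require the explicit greedy computation you carry out.
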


\begin{proof}
Assume that $\$_3$ is non-canonical. By Theorem \ref{tkz1}, there
exists the smallest counterexample $y\in[c_3+2,c_2+c_3-1]$ of
$\$_3$. Next, we will deduce a contradiction by analyzing $y$ in
detail.

\begin{enumerate}[(1)]
\item If $y\in[c_3+2,c_2+c_3-2]$, then $y$ is also a counterexample of
$\$_2$, a contradiction.

\item Otherwise, $y=c_2+c_3-1=kc_2$ with $1<k<c_2$ by the proof of
Theorem~\ref{tkz2}. Thus, we have
\[
\$_2=\langle1,c_2,kc_2-c_2+1,kc_2,2kc_2-c_2\rangle
\]

Since $\$_1$ is non-canonical, we have the smallest counterexample
$x\in[c_3+2,c_2+2c_3-2]$ by Theorem \ref{tkz1}.\smallskip

\begin{itemize}
\item If $x\in[c_3+2,c_2+2c_3-3]$, then $x$ is also a counterexample of
$\$_2$, a contradiction.

\item Otherwise, $x=c_2+2c_3-2=2kc_2-c_2$. By Theorem \ref{dis},
we have
\[
\begin{array}{rcll}
\mathrm{GRD}_{\$_1}(x)&=&(0,k-1,0,1)\\
\mathrm{OPT}_{\$_1}(x)&=&(\beta_1,0,\beta_3,0)&\mathrm{with}\
\beta_3\leq2\\
\end{array}
\]

\begin{itemize}
\item If $\beta_3\leq1$, then $\beta_1>k-1$. It is easy to see
$|\mathrm{OPT}_{\$_1}(x)|\geq|\mathrm{GRD}_{\$_1}(x)|$, a
contradiction.

\item If $\beta_3=2$, then $\beta_1=c_2-2>k-2$. Thus, we have
$|\mathrm{OPT}_{\$_1}(x)|>|\mathrm{GRD}_{\$_1}(x)|$, a
contradiction.
\end{itemize}
\end{itemize}
\end{enumerate}
Therefore, $\$_3=\langle1,c_2,c_3\rangle$ is canonical. 
\end{proof}

\begin{enumerate}
\item[($\Leftarrow$)]

First, we show that $\$_1=\langle1,2,c_3,c_3+1\rangle$ is
non-canonical with $c_3>3$. It is easy to see
$\mathrm{OPT}_{\$_1}(2c_3)=(0,0,2,0)$. By Theorem \ref{dis}, we have
$\mathrm{GRD}_{\$_1}(2c_3)=(\alpha_1,\alpha_2,0,1)$. Since $c_3>3$,
we have $|\mathrm{GRD}_{\$_1}(c_3-1)|>1$, i.e.,
$|\mathrm{GRD}_{\$_1}(2c_3)|>2$. Thus, $\$_1$ is non-canonical.
\medskip\smallskip

Secondly, we show that $\$_2=\langle1,2,c_3,c_3+1,2c_3\rangle$ is
canonical. Assume that $\$_2$ is non-canonical. By Theorem
\ref{tkz1}, there is the smallest counterexample $x$ of $\$_2$ such
that $x\in[c_3+2,3c_3)$ and $x\neq2c_3$. Next, we will deduce a
contradiction by analyzing $x$ in detail.\medskip

\begin{enumerate}[(1)]
\item If $x\in[c_3+2,2c_3)$, then it is also the smallest counterexample of
$\$_1$. For simplicity, let $x=c_3+\kappa$ with $\kappa\in[2,c_3)$.

\begin{itemize}
\item If $\kappa=2\ell+1$, then
$1\leq \ell\leq\frac{c_3-1}{2}-1$. By Theorem~\ref{dis}, we have
\[
\begin{array}{rcl}
\mathrm{GRD}_{\$_1}(x)&=&(0,\ell,0,1)\\
\mathrm{OPT}_{\$_1}(x)&=&(\beta_1,\beta_2,1,0)\\
\end{array}
\]

Thus, $\beta_1+2\beta_2=2\ell+1$ and $\beta_1\in\{0,1\}$. It is easy
to see $|\mathrm{GRD}_{\$_1}(x)|<|\mathrm{OPT}_{\$_1}(x)|$, a
contradiction.\smallskip

\item If $\kappa=2\ell$, then $2\leq \ell\leq\frac{c_3}{2}-1$.
Similarly, we have
\[
\begin{array}{rcl}
\mathrm{GRD}_{\$_1}(x)&=&(1,\ell-1,0,1)\\
\mathrm{OPT}_{\$_1}(x)&=&(0,\ell,1,0)\\
\end{array}
\]
Obviously, $|\mathrm{OPT}_{\$_1}(x)|=|\mathrm{GRD}_{\$_1}(x)|$, a
contradiction.\medskip
\end{itemize}

\item Otherwise, $x\in(2c_3,3c_3)$. Let
$x=2c_3+\kappa$ with $\kappa\in[2,c_3)$.

\begin{itemize}
\item If $\kappa=2\ell+1$, then
$1\leq \ell\leq\frac{c_3-1}{2}-1$. By Theorem \ref{dis}, we have
\[
\begin{array}{rcl}
\mathrm{GRD}_{\$_2}(x)&=&(1,\ell,0,0,1)\\
\mathrm{OPT}_{\$_2}(x)&=&(0,\beta_2,\beta_3,\beta_4,0)\\
\end{array}
\]

If $\beta_3+\beta_4\geq3$, then
$2\beta_2+\beta_3c_3+(c_3+1)\beta_4\geq3c_3>x$, a
contradiction.\smallskip

And for $\beta_3+\beta_4=0$, $\beta_3+\beta_4=1$ and
$\beta_3+\beta_4=2$, it is easy to get a contradiction
similarly.\smallskip

\item If $\kappa=2\ell$, then
we can also deduce a contradiction similar to the above
case.\smallskip
\end{itemize}
\end{enumerate}
Thus, $\$_2=\langle1,2,c_3,c_3+1,2c_3\rangle$ is
canonical.\medskip\smallskip

\item[($\Rightarrow$)]
For the integer $c_3+c_4$, it is easy to see
$\mathrm{GRD}_{\$_2}(c_3+c_4)=(\alpha_1,\alpha_2,0,0,1)$ and
$\mathrm{OPT}_{\$_2}(c_3+c_4)=(0,0,1,1,0)$. Since $\$_2$ is
canonical, we have either $\alpha_1=1,\alpha_2=0$ or
$\alpha_1=0,\alpha_2=1$, i.e., {
\renewcommand{\theequation}{A}
\begin{equation}\label{eq1}
c_5=c_3+c_4-1\text{ or }c_5=c_3+c_4-c_2.
\end{equation}
} For the integer $2c_4$, we have {
\renewcommand{\theequation}{B}
\begin{equation}\label{eq2}
c_5=2c_4-1\text{ or }c_5=2c_4-c_2\text{ or }c_5=2c_4-c_3
\end{equation}
} Correlating (\ref{eq1}) with (\ref{eq2}), we have 3 feasible
equations as follows:
\[
\begin{array}{l}
\textcircled{\scriptsize 1}\left\{\begin{array}{lcl}c_5&=&c_3+c_4-c_2\\
c_5&=&2c_4-c_3\\
\end{array}\right.\textcircled{\scriptsize 2}\left\{\begin{array}{lcl}c_5&=&c_3+c_4-1\\
c_5&=&2c_4-c_3\\
\end{array}\right.
\end{array}
\]
\[
\textcircled{\scriptsize 3}\left\{
\begin{array}{lcl}c_5&=&c_3+c_4-1\\
c_5&=&2c_4-c_2\\
\end{array}\right.
\]

Next, we will deduce a contradiction from \textcircled{\scriptsize
1} and \textcircled{\scriptsize 2} respectively.

\begin{enumerate}[(1)]
\item Solving \textcircled{\scriptsize 1}, we have $c_4=2c_3-c_2$ and
$c_5=3c_3-2c_2$. Thus,
\[
\$_2=\langle1,c_2,c_3,2c_3-c_2,3c_3-2c_2\rangle
\]
Since $\$_1$ is non-canonical, there is the smallest counterexample
$x\in[c_3+2,3c_3-c_2-1]$ of $\$_1$.\medskip

If $x\in[c_3+2,3c_3-2c_2-1]$, then $x$ is also a counterexample of
$\$_2$, contradiction.\smallskip

Otherwise, $x\in[3c_3-2c_2,3c_3-c_2-1]$. By Theorem~\ref{dis}, we
have
\[
\begin{array}{rcll}
\mathrm{GRD}_{\$_1}(x)&=&(\alpha_1,\alpha_2,0,1)\\
\mathrm{OPT}_{\$_1}(x)&=&(\beta_1,\beta_2,\beta_3,0)&\mathrm{with}\
\beta_3\leq2
\end{array}
\]

\begin{itemize}
\item If $c_3\geq2c_2$, then
$\mathrm{GRD}_{\$_1}(x)=(\alpha_1,\alpha_2,0,1)$ and
$\mathrm{OPT}_{\$_1}(x)=(\beta_1,0,\beta_3,0)$ where
$\beta_3\in\{0,1,2\}$ and $\alpha_2>0$. Since
$x\in[3c_3-2c_2,3c_3-c_2-1]$, we have $|\mathrm{GRD}_{\$_1}(x)|\leq
c_3-2c_2+2$.\medskip

For $\beta_3=0$, $\beta_3=1$ and $\beta_3=2$, it is easy to deduce a
contradiction respectively.\smallskip

\item If $c_3<2c_2$, then $\mathrm{GRD}_{\$_1}(x)=(\alpha_1,0,0,1)$ and
$\mathrm{OPT}_{\$_1}(x)=(0,\beta_2,\beta_3,0)$ where
$\beta_3\in\{0,1,2\}$ and $\beta_2+\beta_3<1+\alpha_1$.

\begin{itemize}
\item If $\beta_3=2$, then $\beta_2=0$ and $\alpha_1=c_2$, a
contradiction.\smallskip

\item If $\beta_3=1$, then $c_3=(\beta_2+1)c_2-\alpha_1$. Since
$c_3\in(c_2,2c_2)$, we have $c_3=2c_2-\alpha_1$. Thus,
\[
\$_2=\langle1,c_2,2c_2-\alpha_1,3c_2-2\alpha_1,4c_2-3\alpha_1\rangle
\]

If $\alpha_1=1$, then
$\$_2=\langle1,c_2,2c_2-1,3c_2-2,4c_2-3\rangle$. It is easy to see
$\langle1,c_2,2c_2-1,3c_2-2\rangle$ is canonical, a
contradiction.\smallskip

If $\alpha_1>1$, then $4c_2-2\alpha_1$ is a counterexample of
$\$_2$, a contradiction.\smallskip

\item If $\beta_3=0$, then
$c_3=\frac{\beta_2-1}{2}c_2+(c_2-\frac{\alpha_1}{2})=qc_2+r$. Since
$c_3<2c_2$, we have $\beta_2=3$ and $x=3c_2$ and
$q=1,r=c_2-\frac{1}{2}\alpha_1$. Thus,
\[
\$_2=\langle1,c_2,c_2+r,c_2+2r,c_2+3r\rangle
\]
Since $\beta_2+\beta_3<1+\alpha_1$, we have $\alpha_1>2$. By
Theorem~\ref{tkz2}, $\$_3=\langle1,c_2,c_2+r\rangle$ is
non-canonical. It is easy to see $2c_2<c_2+2r$ is the smallest
counterexample of $\$_2$, a contradiction.\medskip

\end{itemize}

\end{itemize}

\item Solving \textcircled{\scriptsize 2}, we have $c_4=2c_3-1$ and
$c_5=3c_3-2$. Thus,
\[
\$_2=\langle1,c_2,c_3,2c_3-2,3c_3-2\rangle
\]
It is easy to see $c_3+c_4-1<c_5$, i.e., the smallest counterexample
of $\$_1$ is also a counterexample of $\$_2$, a
contradiction.\medskip

\item Solving \textcircled{\scriptsize 3}, we have $c_4=c_3+c_2-1$ and
$c_5=2c_3+c_2-2$. Thus,
\[
\$_2=\langle1,c_2,c_3,c_2+c_3-1,c_2+2c_3-2\rangle
\]
By Lemma \ref{lem3}, $\langle1,c_2,c_3\rangle$ is canonical. Since
$\$_1$ is non-canonical, $2c_3$ is a counterexample of $\$_1$ by
Theorem \ref{th}. And we claim $2c_3\geq c_2+2c_3-2$. Otherwise,
$2c_3$ is a counterexample of $\$_2$, a contradiction. Thus,
$c_2\leq2$, i.e.,
\[
\$_2=\langle1,2,c_3,c_3+1,2c_3\rangle
\]
\end{enumerate}
\end{enumerate}
This completes the proof.
\end{proof}

\medskip

\begin{proof}[Proof of Lemma~\ref{cx5}]
First, we introduce some notation as follows:

\begin{itemize}
\item $x=c_{m+1}+\delta$ is the smallest counterexample of
$\$_3$.

\item $c_s$ is the largest coin used in all the optimal
representations of $x$ with $s\leq m$.

\item $c_l$ and $c_h$ are the smallest coin and the largest coin used
in the optimal representation of $x-c_s$ respectively.
\end{itemize}

Thus, we have $c_l>x-c_{s+1}$ by definition of $c_s$. Since $x$ is
the smallest counterexample of $\$_3$, we have
\[
\begin{array}{rcl}
|\mathrm{GRD}_{\$_3}(x-c_s)|&=&|\mathrm{OPT}_{\$_3}(x-c_s)|\\
|\mathrm{GRD}_{\$_3}(x-c_l)|&=&|\mathrm{OPT}_{\$_3}(x-c_l)|\\
|\mathrm{GRD}_{\$_3}(x-c_h)|&=&|\mathrm{OPT}_{\$_3}(x-c_h)|\\
\end{array}
\]

Assume that $x$ is not the sum of two coins, i.e., $c_s+c_l<x$.
Thus, we can find one coin $c_h$ besides a coin $c_l$ and a coin
$c_s$ in the optimal representation of $x$.

\begin{enumerate}[(1)]
\item If $s\leq m-1$, then $c_l>x-c_m=\delta+d_{m+1}$. By
Lemma~\ref{lem5}, it is easy to see $x-c_h\geq c_s+c_l>c_{s+1}$.
Thus, $c_{s+1}$ should appear in the optimal representation of $x$,
a contradiction.\smallskip

\item Otherwise, $s=m$.

\begin{itemize}
\item If $c_h+c_m>c_{m+1}$, there exists $c_{m+1}+c_i=c_h+c_s$ by the
hypothesis. We can replace $c_h$ and $c_m$ with $c_{m+1}$ and $c_i$,
a contradiction.

\item If $c_h+c_m<c_{m+1}$, then $\mathrm{OPT}_{\$_3}(x)$ uses one coin
$c_l$ besides a coin $c_h$ and a coin $c_m$. Since $s=m$, we have
$c_l>x-c_{m+1}$, i.e., $\delta<c_l\leq c_h<d_{m+1}$. By the
hypothesis, we have $c_{m+1}+c_i=2c_m$. Consider
$c_m+\delta<c_{m+1}$.
\[
\begin{array}{rcl}
|\mathrm{GRD}_{\$_3}(c_m+\delta)|&=&1+|\mathrm{GRD}_{\$_3}(\delta)|\\
&>&1+|\mathrm{GRD}_{\$_3}(\delta+d_{m+1})|\\
&=&1+|\mathrm{GRD}_{\$_3}(c_m+\delta-c_i)|\\
\end{array}
\]
Therefore, $c_m+\delta$ is a counterexample of $\$_3$, a
contradiction.

\item If $c_h+c_m=c_{m+1}$, then $\mathrm{OPT}_{\$_3}(x)$ uses
one coin $c_l$ besides a coin $c_h$ and a coin $c_m$. Thus,
$\delta<c_l\leq c_h=d_{m+1}$.
\[
\begin{array}{rcl}
|\mathrm{GRD}_{\$_3}(x)|&=&1+|\mathrm{GRD}_{\$_3}(\delta)|\\
&<&2+|\mathrm{GRD}_{\$_3}(\delta)|=|\mathrm{OPT}_{\$_3}(x)|
\end{array}
\]
This is a contraction.

\end{itemize}
\end{enumerate}
\end{proof}

\end{document}